\theoremstyle{plain}
\newtheorem{theorem}{Theorem}
\newtheorem{lemma}{Lemma}
\newtheorem{proposition}{Proposition}
\theoremstyle{definition}
\newtheorem{remark}{Remark}
\newtheorem{example}{Example}
\newtheorem{definition}{Definition}
\newcommand{\partition}[2]{\textsc{Partitions}(#1,#2)}
\newcommand{\union}[2]{\textsc{Union}(#1,#2)}
\newcommand{\MMS}{\textsc{MMS}}
\newcommand{\mms}[4]{\MMS^{#2\text{-out-of-}#3}_#1\left(
#4
\right)}
\newcommand{\MMSA}[2]{\MMS^{#1\text{-out-of-}#2}}
\newcommand{\BBFS}{\textsc{BBFS}}
\newcommand{\BBFSA}[1]{\BBFS^{#1}}
\newcommand{\JSS}{\textsc{JSS}}
\newcommand{\JSSA}[1]{\JSS^{#1}}
\newcommand{\itemrect}[5]{
\draw (#1,#2) rectangle +(#3,#4) node[pos=.5] {#5};
}
\newcommand{\mmsbundle}[5]{
\itemrect{#1}{0} {2}{#2} {#3}
\draw (#1,0) rectangle +(2,#2) rectangle +(0,#4) node[pos=.5] {#5};
}
\newcommand{\ins}[1]{\langle #1 \rangle}
\newcommand{\range}[2]{\in\{#1,\dots,#2\}}
\newcommand{\HiddenExplanation}[1]{}
\newcommand{\ceil}[1]{\lceil #1 \rceil}
\newcommand{\floor}[1]{\lfloor #1 \rfloor}
\newcommand*\circled[1]{\tikz[baseline=(char.base)]{
            \node[shape=circle,draw,inner sep=2pt] (char) {#1};}}
\def\DRAFT{}
\newcommand{\HH}[1]{{\color{magenta}{HH: }{#1} \color{red}}}
\definecolor{ForestGreen}{rgb}{.13,.54,.13}
\newcommand{\er}[1]{\textcolor{ForestGreen}{#1}}
\newcommand{\erel}[1]{\textcolor{ForestGreen}{({Erel:} #1)}}
\newcommand{\AS}[1]{{\color{teal}{AS: }{#1} \color{green}}}
\newcommand{\HH}[1]{}
\newcommand{\erel}[1]{}
\newcommand{\er}[1]{#1}
\newcommand{\AS}[1]{}
\newcommand{\goodsapp}{\floor{ (\ell+\frac{1}{2})n}}
\newcommand{\citet}[1]{\citeA{#1}}
\newcommand{\citep}[1]{\cite{#1}}
\begin{document}

\title{Ordinal Maximin Share Approximation for Goods}

\author{\name Hadi Hosseini 
\email hadi@psu.edu
\\
\addr Pennsylvania State University, University Park
\\
\\
\name Andrew Searns 
\email andrew.searns@jhuapl.edu
\\
\addr Johns Hopkins University Applied Physics Laboratory
\\
\AND
\name Erel Segal-Halevi 
\email erelsgl@gmail.com
\\
\addr Ariel University, Ariel 40700
}

\maketitle

\begin{abstract}
In fair division of indivisible goods, $\ell$-out-of-$d$ maximin share (MMS) is the value that an agent can guarantee by partitioning the goods into $d$ bundles and choosing the $\ell$ least preferred bundles. 
Most existing works aim to guarantee to all agents a constant fraction of their 1-out-of-$n$ MMS. But this guarantee is sensitive to small perturbation in agents' cardinal valuations. We consider a more robust approximation notion, which depends only on the agents' \emph{ordinal} rankings of bundles. 
We prove the existence of $\ell$-out-of-$\floor{(\ell+\frac{1}{2})n}$ MMS allocations of goods for any integer $\ell\geq 1$, and present a polynomial-time algorithm that finds a $1$-out-of-$\ceil{\frac{3n}{2}}$ MMS allocation when $\ell = 1$. We further develop an algorithm that provides a weaker ordinal approximation to MMS for any $\ell > 1$.
\end{abstract}

\section{Introduction}
Fair division is the study of how to distribute a set of items among a set of agents in a fair manner. Achieving fairness is particularly challenging when items are \emph{in}divisible. Computational and conceptual challenges have motivated researchers and practitioners to develop a variety of fairness concepts that are applicable to a large number of allocation problems.\footnote{See \citet{Bouveret2016Fair,lang2016fair,markakis2017approximation,doi:10.1146/annurev-economics-080218-025559} for detailed surveys and discussions.} One of the most common fairness concepts, proposed by \citet{budish2011combinatorial}, is Maximin Share (MMS), that aims to give each agent a bundle that is valued at a certain threshold. The MMS threshold, also known as $1$-out-of-$d$ MMS, generalizes the guarantee of the \textit{cut-and-choose} protocol.
It is the value that an agent can secure by partitioning the items into $d$ bundles, assuming it will receive the least preferred bundle.
The MMS value depends on the number of partitions, $d$. When all items are goods (i.e., have non-negative values), the $1$-out-of-$d$ MMS threshold is (weakly) monotonically decreasing as the number of partitions ($d$) increases. 

When allocating goods among $n$ agents, a natural desirable threshold is satisfying $1$-out-of-$n$ MMS for all agents. Unfortunately, while this value can be guaranteed for $n=2$ agents through the cut-and-choose protocol, a $1$-out-of-$n$ MMS allocation of goods may not exist in general for $n \geq 3$ \citep{procaccia2014fair,kurokawa2018fair}.
These negative results have given rise to \emph{multiplicative approximations}, wherein each agent is guaranteed at least a constant fraction of its 1-out-of-$n$ MMS. While there have been many attempts in developing algorithms that improve the bound to close to 1, the best currently known fraction is $\frac{3}{4} + \frac{1}{12n}$ \citep{garg2020improved}.

Despite numerous studies devoted to their existence and computation, there is a conceptual and practical problem with the multiplicative approximations of MMS: they are very sensitive to agents' precise cardinal valuations. 
To illustrate, suppose $n=3$ and there are four goods $g_1,g_2,g_3,g_4$ that Alice values at $30, 39, 40, 41$ respectively. Her $1$-out-of-$3$ MMS is $40$, and thus a $\frac{3}{4}$ fraction guarantee can be satisfied by giving her the bundle $\{g_1\}$ or a bundle with a higher value.
But if her valuation of good $g_3$ changes slightly to $40+\varepsilon$ (for any $\varepsilon>0$), then  $\frac{3}{4}$ of her $1$-out-of-$3$ MMS is larger than $30$, the bundle $\{g_1\}$ is no longer acceptable for her.
Thus, the acceptability of a bundle (in this example $\{g_1\}$) might be affected by an arbitrarily small perturbation in the value of an \emph{irrelevant} good (i.e. $g_3$).


In the microeconomics literature, it is common to measure agents' preferences as \emph{ordinal rankings} of the bundles; even when utility functions are used, it is understood that they only represent rankings.
From this viewpoint, 
the set of acceptable bundles should only depend on the ranking of the bundles, and should not be affected by changes in valuations that---similar to the $\varepsilon$ change in the value of $g_3$---do not affect this ranking.
According to this principle, \citet{budish2011combinatorial} suggested the 1-out-of-$(n+1)$ MMS as a relaxation of the 1-out-of-$n$ MMS. 
In the above example, $1$-out-of-$4$ MMS fairness can be satisfied by giving Alice $\{g_1\}$ or a better bundle; small inaccuracies or noise in the valuations do not change the set of acceptable bundles.
Hence, this notion provides a more robust approach in evaluating fairness of allocations.

To date, it is not known if 1-out-of-$(n+1)$ MMS allocations are guaranteed to exist. 
We aim to find allocations of goods that guarantee $1$-out-of-$d$ MMS for some integer $d>n$. A $1$-out-of-$d$ MMS allocation guarantees to each agent a bundle that is at least as good as the worst bundle in the best $d$-partition. 

The aforementioned guarantee can be naturally generalized to  $\ell$-out-of-$d$ MMS \citep{babaioff2021competitive}, that guarantees to each agent the value obtained by partitioning the goods into $d$ bundles and selecting the $\ell$ least-valuable ones.
%
Therefore, we further investigate the $\ell$-out-of-$d$ MMS generalization that allows us to improve the fairness thresholds. 
%
The notion of $\ell$-out-of-$d$ MMS fairness is \emph{robust} in the sense that, a fair allocation remains fair even when each agent's utility function goes through an arbitrary monotonically-increasing transformation.
Given these notions, we ask the following questions: 
%
\begin{quote}
\emph{In the allocation of indivisible goods,
(a)
For what combinations of integers $\ell$ and $d$, can $\ell$-out-of-$d$ MMS allocations be guaranteed? and 
(b) For what integers $\ell$ and $d$ can  $\ell$-out-of-$d$ MMS allocations be computed in polynomial time?
}
\end{quote}


\subsection{Our Contributions}
We investigate the existence and computation of ordinal MMS approximations and make the several contributions.

In \textbf{Section \ref{sec:goods-lone}},
 we prove the existence of $\ell$-out-of-$d$ MMS allocation of goods when $d\geq \goodsapp$ (Theorem \ref{thm:l-out-of-d-existence}).
In particular, $1$-out-of-$\floor{3n/2}$ MMS, $2$-out-of-$\floor{5n/2}$ MMS, 
$3$-out-of-$\floor{7n/2}$ MMS, and so on, are all guaranteed to exist.
This finding  generalizes the previously known existence result of $1$-out-of-$\ceil{3n/2}$ MMS \citep{hosseini2021mms}. 

The proof uses an algorithm which, given lower bounds on the $\ell$-out-of-$d$ MMS values of the agents, returns an $\ell$-out-of-$d$ MMS allocation. The algorithm runs in polynomial time given the agents' lower bounds. However, computing the exact  $\ell$-out-of-$d$ MMS values is NP-hard. In the following sections we propose two solutions to this issue.

In \textbf{Section \ref{sec:goods-poly}}, we 
present polynomial-time algorithms 
that find an $\ell$-out-of-$(d + o(n))$ MMS-fair allocation, where $d=(\ell+\frac{1}{2})n$.
Specifically, for $\ell=1$, we present a polynomial-time algorithm 
for finding a 1-out-of-$\ceil{3n/2}$ MMS allocation (Theorem~\ref{thm:3n-2poly});
this matches the existence result for 1-out-of-$\floor{3n/2}$ MMS up to an additive gap of at most 1.
For $\ell>1$, we present a different polynomial-time algorithm 
for finding a 1-out-of-$\ceil{(\ell+\frac{1}{2})n + O(n^{2/3})}$ MMS allocation (Theorem~\ref{thm:goods-approx}).

In \textbf{Appendix \ref{sec:simulation}},
we conduct simulations with valuations generated randomly from various distributions.
For several values of $\ell$, we compute a lower bound on the $\ell$-out-of-$\goodsapp$ MMS guarantee using a simple greedy algorithm.
We compare this lower bound to an upper bound on the $(\frac{3}{4}+\frac{1}{12n})$-fraction MMS guarantee, which is currently the best known worst-case multiplicative MMS approximation.%
\footnote{
In general, ordinal and multiplicative approximations are incomparable from the theoretical standpoint---each of them may be larger than the other in some instances (see Appendix \ref{sec:simulation}). Therefore, we compare them through simulations using synthetic data. 
}
We find that, for any $\ell\geq 2$, when the number of goods is at least $\approx 20 n$, the lower bound on the ordinal approximation is better than the upper bound on the multiplicative approximation. 
This implies that, in practice, the algorithm of  Section \ref{sec:goods-lone} can be used with these lower bounds to attain an allocation in which each agent receives a value that is significantly better than the theoretical guarantees.

\subsection{Techniques}
At first glance, it would seem that the techniques used to attain $2/3$ approximation of MMS should also work for achieving $1$-out-of-$\floor{3n/2}$ MMS allocations, since both guarantees approximate the same value, namely, the $\frac{2}{3}$ approximation of 
the ``proportional share'' ($\frac{1}{n}$ of the total value of all goods).
In \textbf{Appendix \ref{sec:negative}} we present an example showing that this is not the case, and thus, achieving ordinal MMS approximations requires new  techniques. In this section, we briefly describe the techniques that we utilize to achieve ordinal approximations of MMS.

\paragraph{Lone Divider.}
To achieve the existence result for any $\ell\geq 1$,  we use a variant of the \emph{Lone Divider} algorithm, which was 
first presented by \citet{kuhn1967games} for finding a proportional allocation of a divisible good (also known as a ``cake''). Recently, it was shown that the same algorithm can be used for allocating indivisible goods too.
When applied directly, the Lone Divider algorithm finds only an $\ell $-out-of-$((\ell +1)n-2)$ MMS allocation \citep{aigner2022envy},
which for small $\ell$ is substantially worse than our target approximation of 
$\ell $-out-of-$\goodsapp$.
We overcome this difficulty by adding constraints on the ways in which the `lone divider' is allowed to partition the goods, as well as arguing on which goods are selected to be included in each partition (see Section \ref{sec:goods-lone}).

\paragraph{Bin Covering.}
To develop a polynomial-time algorithm when $\ell = 1$, 
we extend an algorithm of \citet{csirik1999two}
for the \emph{bin covering} problem---a dual of the more famous
\emph{bin packing} problem \citep{johnson1973near}.
In this problem, the goal is to fill as many bins as possible with items of given sizes, where the total size in each bin must be above a given threshold.
This problem is NP-hard, but \citet{csirik1999two} presents a polynomial-time $2/3$ approximation. 
This algorithm cannot be immediately applied to the fair division problem since the valuations of goods are \textit{subjective}, meaning that agents may have different valuations of each good.
We adapt this technique to handle subjective valuations.

\section{Related Work}
\label{sec:related}

\subsection{Maximin Share}
The idea of using the highest utility an agent could obtain if all other agents had the same preferences as a benchmark for fairness, originated in the economics literature \cite{Moulin1990Uniform,Moulin1992Welfare}.
It was put to practice in the context of course allocation by \citet{budish2011combinatorial}, where he introduced the ordinal approximation to MMS, and showed a mechanism that guarantees $1$-out-of-$(n+1)$ MMS to all agents by adding a small number of excess goods.
In the more standard fair division setting, in which adding goods is impossible,
the first non-trivial ordinal approximation was 1-out-of-$(2n-2)$ MMS \citep{aigner2022envy}.
\citet{hosseini2021mms} studied the connection between guaranteeing 1-out-of-$n$ MMS for $2/3$ of the agents and the ordinal approximations for \textit{all} agents. The implication of their results is the existence of $1$-out-of-$\ceil{3n/2}$ MMS allocations and a polynomial-time algorithm for $n<6$.
%
Whether or not $1$-out-of-$(n+1)$ MMS can be guaranteed without adding excess goods remains an open problem to date.

The generalization of the maximin share to arbitrary $\ell\geq 1$ was first introduced by
\citet{babaioff2019fair,babaioff2021competitive}, and further studied by \citet{segal2020competitive}.  
They presented this generalization as a natural fairness criterion for agents with different entitlements.
The implication relations between $\ell$-out-of-$d$ MMS-fairness guarantees for different values of $\ell$ and $d$ were characterized by \citet{segal2019maximin}.
Recently, the maximin share and its ordinal approximations have also been applied to some variants of the \emph{cake-cutting} problem \citep{ElkindSeSu21,ElkindSeSu21b,ElkindSeSu21c,bogomolnaia2022guarantees}.

\subsection{Multiplicative MMS Approximations}
The multiplicative approximation to MMS originated in the computer science literature \citep{procaccia2014fair}.
The non-existence of MMS allocations \citep{kurokawa2018fair} and its intractability \citep{Bouveret2016,woeginger1997polynomial} have given rise to a number of approximation techniques. 

These algorithms guarantee that each agent receives an approximation of their maximin share threshold. 
The currently known algorithms guarantee $\beta \geq 2/3$ \citep{kurokawa2018fair,amanatidis2017approximation,garg2018approximating} and $\beta \geq 3/4$ \citep{ghodsi2018fair,garg2020improved} in general, and $\beta \geq 7/8$ \citep{amanatidis2017approximation} as well as $\beta\geq 8/9$ \cite{gourves2019maximin} when there are only three agents.
There are also MMS approximation algorithms for settings with constraints, such as when the goods are allocated on a cycle and each agent must get a connected bundle \citep{truszczynski2020maximin}.
\citet{mcglaughlin2020improving} showed an algorithm for approximating the maximum Nash welfare (the product of agents' utilities), which also attains a fraction $1/(2n)$ of the MMS.

Recently, \citet{nguyen2017approximate} gave a Polynomial Time Approximation Scheme (PTAS) for a notion defined as \textit{optimal-MMS}, that is, the largest value, $\beta$, for which each agent receives at least a fraction $\beta$ of its MMS. Since the number of possible partitions is finite, an optimal-MMS allocation always exists, and it is an MMS allocation if $\beta \geq 1$. However, an optimal-MMS allocation may provide an arbitrarily bad ordinal MMS guarantee. \citet{Searns_Hosseini_2020,hosseini2021mms} show that for every $n$, there is an instance with $n$ agents in which under \textit{any} optimal-MMS allocation only a constant number of agents ($\leq 4$)  receive their MMS value.

\subsection{Fairness Vased on Ordinal Information}
An advantage of the ordinal MMS approximation is that it depends only on the ranking over the bundles. Other fair allocation algorithms with this robustness property are the Decreasing Demands algorithm of \citet{herreiner2002simple}, the Envy Graph algorithm of \citet{lipton2004approximately}, and the UnderCut algorithm of \citet{Brams2012Undercut}.

\citet{amanatidis2016truthful,halpern2021fair} study an even stronger robustness notion, where the agents report only a ranking over the \emph{goods}. Their results imply that, in this setting, the highest attainable multiplicative approximation of MMS is $\Theta(1/\log n)$.

\citet{menon2020algorithmic} define a fair allocation algorithm as \emph{stable} if it gives an agent the same value even if the agent slightly changes his cardinal valuations of goods, as long as the ordinal ranking of goods remains the same. They show that most existing algorithms are not stable, and present an approximately-stable algorithm for the case of two agents.

Finally, robustness has been studied also in the context of \emph{fair cake-cutting}. \citet{aziz2014cake} define an allocation \emph{robust-fair} if it remains fair even when the valuation of an agent changes, as long as its ordinal information remains unchanged.
\citet{edmonds2011cake} study cake-cutting settings in which agents can only cut the cake with a finite precision.

\section{Preliminaries} \label{sec:prel}

\subsection{Agents and Goods}
\label{sub:agents}
Let $N = [n] := \{1,\ldots, n\}$ be a set of agents and $M$ denote a set of $m$ indivisible goods. 
We denote the value of agent $i\in N$ for good $g\in M$ by $v_{i}(g)$. 
We assume that the valuation functions are \textit{additive}, that is, for each subset $G\subseteq M$, $v_{i}(G) = \sum_{g\in G} v_{i}(g)$, and  $v_i(\emptyset)=0$.%
\footnote{In Appendix~\ref{app:responsive} we complement our results with a non-existence result for the more general class of \textit{responsive} preferences.}

An \emph{instance} of the problem is denoted by $I = \ins{N, M, V}$, where $V = (v_1, \ldots, v_n)$ is the valuation profile of agents. We assume all agents have a non-negative valuation for each good $g\in M$, that is, $v_i(g) \geq 0$. 
An \emph{allocation} $A = (A_1, \ldots, A_n)$ is an $n$-partition of $M$ that allocates the bundle of goods in $A_{i}$ to each agent $i\in N$.

It is convenient to assume that the number of goods is sufficiently large. Particularly, some algorithms implicitly assume that $m\geq n$, while some algorithms implicitly assume that $m\geq \ell \cdot n$.
These assumptions are without loss of generality, since if $m$ in the original instance is smaller, we can just add dummy goods with a value of $0$ to all agents. 

\subsection{The Maximin Share}

For every agent $i\in N$ and integers $1\leq \ell < d$,
the \emph{$\ell$-out-of-$d$ maximin share of $i$ from $M$}, denoted $\mms{i}{\ell}{d}{M}$, is defined as
\begin{align*}
\mms{i}{\ell}{d}{M} := 
~~
\max_{\mathbf{P}\in \partition{M}{d}}
~~
\min_{Z\in \union{\mathbf{P}}{\ell}}
~~
v_i(Z)
\end{align*}
where the maximum is over all partitions of $M$ into $d$ subsets, and the minimum is over all unions of $\ell$ subsets from the partition. 
We say that an allocation $A$ is \emph{an $\ell$-out-of-$d$-MMS allocation} if  for all agents $i\in N$, $v_{i}(A_i) \geq \mms{i}{\ell}{d}{M}$.


Obviously $\mms{i}{\ell}{d}{M} \leq \frac{\ell}{d}v_i(M)$, and the equality holds if and only if $M$ can be partitioned into $d$ subsets with the same value. 
%
%
%
Note that $\mms{i}{\ell}{d}{M}$ is a weakly-increasing function of $\ell$ and  a weakly-decreasing function of $d$.

The value $\mms{i}{\ell}{d}{M}$ is at least as large, and sometimes larger than, $\ell\cdot \mms{i}{1}{d}{M}$. For example, suppose $\ell=2$, there are $d-1$ goods with value $1$ and one good with value $\varepsilon < 1$. Then $\mms{i}{2}{d}{M} = 1 + \varepsilon$ but $2\cdot \mms{i}{1}{d}{M} = 2\varepsilon$.

The maximin-share notion is scale-invariant in the following sense: if the values of each good for an agent, say $i$, are multiplied by a constant $c$, then agent $i$'s MMS value is also multiplied by the same $c$, so the set of bundles that are worth for $i$ at least $\mms{i}{\ell}{d}{M}$ does not change.

\subsection{The Lone Divider Algorithm}
\label{sub:lone-divider}
A general formulation of the Lone Divider algorithm, based on \citet{aigner2022envy}, is shown in Algorithm \ref{alg:lone-divider-general}. It accepts as input a set $M$ of items and a threshold value $t_i$ for each agent $i$. These values should satisfy the following condition for each agent $i\in N$.
\begin{definition}[Reasonable threshold]
\label{def:reasonable}
Given a set $M$, a value function $v_i$ on $M$, and an integer $n\geq 2$, \emph{a reasonable threshold for $v_i$} is a real number $t_i\in \mathbb{R}$ satisfying the following condition: For every integer $k\in\{0,\ldots,n-1\}$ and any $k$ disjoint subsets $B_1,\ldots,B_k \subseteq M$, if
\begin{align*}
\forall c\in[k]: v_i(B_c) < t_i,
\end{align*}
then there exists a partition of 
$M\setminus  \cup_{c\in[k]} B_c$
into 
$M_1 \cup  \cdots \cup  M_{n-k}$, such that
\begin{align*}
\forall j\in[n-k]: v_i(M_j)\geq t_i.
\end{align*}
Informally, if any $k$ unacceptable subsets are given away, then $i$ can partition the remainder into  $n-k$ acceptable subsets.
In particular, the case $k=0$ implies that agent $i$ can partition the original set $M$ into $n$ acceptable subsets.

Given an instance $I = \ins{N, M, V}$ with $N=[n]$, 
a vector $(t_i)_{i=1}^n$ of real numbers is called \emph{a reasonable threshold vector for $I$}
if $t_i$ is a reasonable threshold for $v_i$ for all $i\in N$.%
\footnote{
While we use the Lone Divider algorithm for allocating indivisible goods, it is a more general scheme that can also be used to divide chores or mixed items, divisible or indivisible. See \citet{aigner2022envy} for details.
}
\end{definition}

\begin{example} [\textbf{Reasonable threshold}]
Suppose $M$ is perfectly divisible (e.g. a cake), and let $t_i := v_i(M)/n$. This threshold is reasonable, since if some $k$ bundles with value less than $t_i$ are given away, the value of the remaining cake is more than $(n-k)t_i$. Since the cake is divisible, 
it can be partitioned into  $n-k$ acceptable subsets.
This does not necessarily hold when $M$ is a set of indivisible items; hence, finding reasonable thresholds for indivisible items setting is more challenging. $\blacksquare$
\end{example}

\begin{algorithm}[t]
\caption{
\label{alg:lone-divider-general}
The Lone Divider algorithm.
Based on \citet{kuhn1967games}.
}
\begin{algorithmic}[1]
\REQUIRE ~
  An instance $\ins{N, M, V}$ where $N$ is the set of agents, $M$ is the set of items, $V$ is the vector of agents' valuations; and
a reasonable threshold vector $(t_i)_{i=1}^n$ as denoted in Definition \ref{def:reasonable}.
\ENSURE A partition $M = A_1\cup \cdots\cup A_n$ such that $v_i(A_i)\geq t_i$ for all $i\in [n]$.


\STATE  
\label{step:ldg-cut}
Some arbitrary agent $a\in N$ is asked to
partition $M$ into $|N|$ disjoint subsets, $(Y_j)_{j\in N}$, with 
$\forall j\in N: v_a(Y_j)\geq t_a$.
\STATE 
\label{step:ldg-graph}
Define a bipartite graph $G$ with the agents of $N$ on the one side and the set $Y := \{Y_1,\ldots,Y_{|N|}\}$ on the other side. Add an edge  $(i, Y_j)$ whenever $v_i(Y_j)\geq t_i$. 
\STATE 
\label{step:ldg-efm}
Find a maximum-cardinality envy-free matching$^{\ref{ftn:efm}}$ in $G$.
Give each matched element in $Y$ to the agent paired to it in $N$.
\STATE 
\label{step:ldg-recurse}
Let $N \leftarrow $ the unallocated agents
and $M \leftarrow $ the unallocated objects. 
If $N\neq\emptyset$
go back to Step \ref{step:ldg-cut}.
\end{algorithmic}
\end{algorithm}

\paragraph{Algorithm Description} Algorithm \ref{alg:lone-divider-general} proceeds in the following way: in each step, a single remaining agent is asked to partition the remaining goods into acceptable bundles---bundles whose values are above the divider's threshold.
Then, all agents point at those bundles that are acceptable for them, and the algorithm finds an \textit{envy-free matching} in the resulting bipartite graph.%
\footnote{
\label{ftn:efm}
An \emph{envy-free matching} in a bipartite graph $(N\cup Y, E)$ is a matching in which each unmatched agent in $N$ is not 
adjacent to any matched element in $Y$.
The bipartite graph generated by the Lone Divider algorithm always admits a nonempty envy-free matching, and a maximum-cardinality envy-free matching can be found in polynomial time \citep{aigner2022envy}.
}
The matched bundles are allocated to the matched agents, and the algorithm repeats with the remaining agents and goods. 
It is easy to see that, if all threshold values $t_i$ are reasonable, then Lone Divider  guarantees agent $i$ a bundle with a value of at least $t_i$.
For example, 
when $M$ is a cake,
$t_i = v_i(M)/n$ is a reasonable threshold for every $i$, so Lone Divider can be used to attain a \emph{proportional cake-cutting} \citep{kuhn1967games}.

When $M$ is a set of indivisible goods, $t_i = \mms{i}{\ell}{[(\ell+1)n-2]}{M}$ is a reasonable threshold for every $\ell\geq 1$ \citep{aigner2022envy}, so these ordinal approximations can all be computed directly through the Lone Divider algorithm.
However, directly applying the Lone Divider algorithm cannot guarantee a better ordinal approximation, as we show next.

\begin{example}[\textbf{Execution of Algorithm \ref{alg:lone-divider-general}}]
\label{exm:lone-divider}
For simplicity, we present an example for $\ell=1$.
We show that, 
while Algorithm \ref{alg:lone-divider-general} can guarantee $1$-out-of-$(2n-2)$ MMS,
 it cannot guarantee $1$-out-of-$(2n-3)$ MMS. 
Suppose that there are $4n-6$ goods, and that all agents except the first divider value some $2n-3$ goods at $1-\varepsilon$ and the other $2n-3$ goods at $\varepsilon$
(see Figure \ref{fig:lone-divider}). Then the 1-out-of-$(2n-3)$ MMS of all these agents is $1$.

However, it is possible that the first divider takes an unacceptable bundle containing all $2n-3$ goods of value $\varepsilon$.
Then, no remaining agent can partition the remaining goods into $n-1$ bundles of value at least $1$. In this instance, it is clear that while $\mms{i}{\ell}{(2n-2)}{M}$ is a reasonable threshold,
$\mms{i}{\ell}{(2n-3)}{M}$  is not. $\blacksquare$

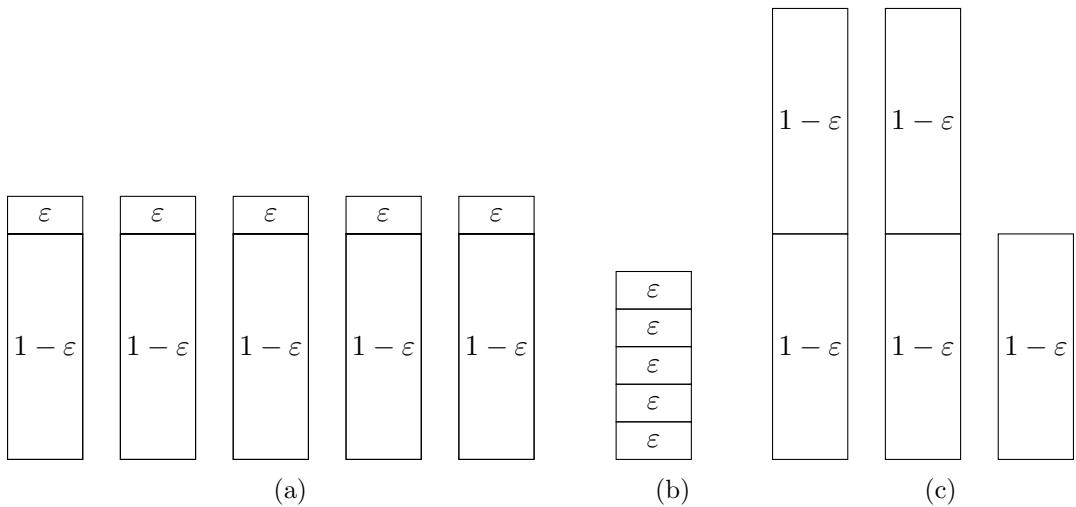
\begin{figure}
\begin{subfigure}[b]{0.5\textwidth}
\begin{tikzpicture}[scale=0.5]
\mmsbundle{0}{6}{$1-\varepsilon$}{7}{$\varepsilon$}
\mmsbundle{3}{6}{$1-\varepsilon$}{7}{$\varepsilon$}
\mmsbundle{6}{6}{$1-\varepsilon$}{7}{$\varepsilon$}
\mmsbundle{9}{6}{$1-\varepsilon$}{7}{$\varepsilon$}
\mmsbundle{12}{6}{$1-\varepsilon$}{7}{$\varepsilon$}
\end{tikzpicture}
\caption{}
\end{subfigure}
~~
\begin{subfigure}[b]{0.1\textwidth}
\begin{tikzpicture}[scale=0.5]
\itemrect{0}{0}{2}{1}{$\varepsilon$}
\itemrect{0}{1}{2}{1}{$\varepsilon$}
\itemrect{0}{2}{2}{1}{$\varepsilon$}
\itemrect{0}{3}{2}{1}{$\varepsilon$}
\itemrect{0}{4}{2}{1}{$\varepsilon$}
\end{tikzpicture}
\caption{}
\end{subfigure}
~~
\begin{subfigure}[b]{0.3\textwidth}
\begin{tikzpicture}[scale=0.5]
\itemrect{0}{0}{2}{6}{$1-\varepsilon$}
\itemrect{0}{6}{2}{6}{$1-\varepsilon$}
\itemrect{3}{0}{2}{6}{$1-\varepsilon$}
\itemrect{3}{6}{2}{6}{$1-\varepsilon$}
\itemrect{6}{0}{2}{6}{$1-\varepsilon$}
\end{tikzpicture}
\caption{}
\end{subfigure}
\caption{
\label{fig:lone-divider}
An illustration of the goods' values in Example \ref{exm:lone-divider}, for $n=4$.
\\
(a) The $2n-3=5$ MMS bundles of some agent.
\\
(b) The unacceptable bundle taken by the first divider.
\\
(c) The remaining goods, which cannot be combined into $n-1=3$ acceptable bundles.
}
\end{figure}

\end{example}

\section{Ordinal Approximation of MMS for Goods}
\label{sec:goods-lone}
In this section we prove the following theorem.

\begin{theorem} \label{thm:l-out-of-d-existence}
Given an additive goods instance, an $\ell$-out-of-$d$ MMS allocation always exists when 
$d = \goodsapp$.
\end{theorem}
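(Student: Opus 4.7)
The plan is to apply the Lone Divider algorithm (Algorithm \ref{alg:lone-divider-general}) with threshold $t_i := \mms{i}{\ell}{d}{M}$ where $d = \goodsapp$, together with the structural constraints on each divider's partition foreshadowed in the Techniques section. By the correctness of the Lone Divider scheme, the theorem reduces to showing that this threshold vector is reasonable in the sense of Definition \ref{def:reasonable}. Since reasonableness is agent-by-agent, I fix an agent $i$, write $v := v_i$ and $t := t_i$, and fix an optimal $\ell$-out-of-$d$ MMS partition $\mathcal{P} = (P_1,\ldots,P_d)$ of $M$, so that every union of $\ell$ parts has value at least $t$.

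The reasonableness condition reduces to the following claim: for every $k \in \{0,\ldots,n-1\}$ and every disjoint $B_1,\ldots,B_k \subseteq M$ with $v(B_c) < t$ for each $c$, the set $M \setminus \bigcup_c B_c$ admits a partition into $n-k$ subsets each of value at least $t$. A preliminary averaging bound $v(M) \geq (d/\ell)\, t$, obtained by summing the lower bound $t$ over all $\binom{d}{\ell}$ $\ell$-unions, together with $d \geq n\ell$, shows that the total remaining value alone is sufficient: $v(M \setminus \bigcup_c B_c) > (n-k)\,t$. The real task is to produce the partition explicitly.

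The natural starting point is to pre-group the $d$ parts of $\mathcal{P}$ into $n$ super-bundles each containing $\ell$ or $\ell+1$ of the $P_j$'s; this is possible because $d - n\ell = \lfloor n/2 \rfloor$, and by the MMS inequality each super-bundle has value at least $t$, so the case $k=0$ is immediate. For $k \geq 1$, since each super-bundle has value at least $t$ and each $B_c$ has value strictly below $t$, no $B_c$ can fully absorb any super-bundle, and the union $B := \bigcup_c B_c$ can fully absorb at most $k-1$ super-bundles; hence at least $n-k+1$ super-bundles survive non-absorbed, giving a comfortable margin of one super-bundle beyond the $n-k$ we need.

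The main obstacle I expect is controlling the partial damage $B$ inflicts on non-absorbed super-bundles, because a single $B_c$ may intersect many super-bundles with small damage to each, potentially pushing several below threshold without fully absorbing any. The strategy would be to rebundle damaged super-bundles by merging them either with their $(\ell+1)$-th ``extra'' parts or with the spare non-absorbed super-bundle, then distributing the remaining goods of partially damaged parts back into the surviving super-bundles (which can only increase their value). The total-damage bound $v(B) < k\,t$ is what ensures that the $\lfloor n/2\rfloor$ slack parts plus one spare super-bundle suffice to restore $n-k$ super-bundles to value at least $t$ after this rebundling; the structural constraints on the divider's partition from the Techniques section are likely what make this bookkeeping close in every case, by aligning the divider's partition with the super-bundle structure and making the ``extra'' parts easy to identify and reassign.
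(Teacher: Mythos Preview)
Your framework is correct --- the theorem does reduce, via the Lone Divider scheme, to showing that the MMS threshold is reasonable --- but the reasonableness claim you write down is stated for \emph{arbitrary} unacceptable bundles $B_1,\ldots,B_k$, and in that generality it is false. This is precisely the content of Example~\ref{exm:lone-divider} and the surrounding discussion: with no constraint on the $B_c$, a previous divider can take a bundle that removes many small goods from agent~$i$'s MMS parts, and then $i$ cannot form $n-k$ acceptable bundles from the remainder. Your ``rebundling'' sketch does not overcome this; the total-damage bound $v(B)<kt$ does not by itself guarantee that the $\lfloor n/2\rfloor$ extra parts suffice to repair all damaged super-bundles, because a single $B_c$ of value just below $t$ can depress \emph{every} super-bundle below threshold while fully absorbing none.

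The paper's route differs from yours in two essential ways. First, the structural constraint is not aligned with any agent's MMS partition (it cannot be, since previous dividers have different MMS partitions); instead the instance is first \emph{ordered}, and each divider's bundles are required to be \emph{$\ell$-balanced}: each bundle must contain exactly one good from each of the top $\ell$ blocks $G_1^n,\ldots,G_\ell^n$ of $n$ goods. This constraint binds the $B_c$'s as well, which is what makes the reasonableness claim true. Second, the argument that the current divider can then form $n-k$ acceptable $\ell$-balanced bundles is not a super-bundle repair; it is a \emph{waste analysis} (Lemmas~\ref{lem:waste} and~\ref{lem:waste2}) combined with bag-filling, with a case split on the number of high-value goods (those of value exceeding $(\ell-x)/2$) and a delicate pairing argument in the hardest case to keep the average waste per bundle below $(\ell-x)/2 + \ell(\ell-1-x)$. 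Your proposal identifies the right scaffold and the right obstacle, but the actual constraint and the mechanism for controlling partial damage are both missing.
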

The proof is constructive: 
we present an algorithm (Algorithm \ref{alg:existenceMMS}) for achieving the above MMS bound. 
Since the algorithm needs to know the exact MMS thresholds for each agent (which is NP-hard to compute), its run-time is not polynomial.
In Section \ref{sec:goods-poly} we present a different algorithm to compute $\ell$-out-of-$d$ MMS allocation when $\ell = 1$ in polynomial-time.

%

Algorithm \ref{alg:existenceMMS} starts with two normalization steps, some of which appeared in previous works and some are specific to our algorithm. 
For completeness, we describe the normalization steps in Sections \ref{sub:normalize-goods} and \ref{sub:ordering}.
The algorithm applies to the normalized instance an adaptation of the Lone Divider algorithm, in which the divider in each step must construct a \emph{balanced} partition. We explain this notion in Section \ref{sub:restricted-ld}.

\subsection{Scaling}
\label{sub:normalize-goods}

We start by scaling the valuations such that 
$\mms{i}{\ell}{d}{M} = \ell$ for each agent $i$. 
The scale-invariance property implies that such rescalings do not modify the set of bundles that are acceptable for $i$.
Then, for each $i$ we perform an additional scaling as follows.
\begin{itemize}
\item Consider a particular $d$-partition  attaining the maximum in the $\mms{i}{\ell}{d}{M}$ definition.
Call the $d$ bundles in this partition the \emph{MMS bundles} of agent $i$.
\item Denote the total value of the $\ell-1$ least-valuable MMS bundles by $x_i$ (or just $x$, when $i$ is clear from the context).
By definition, the value of the $\ell$-th MMS bundle must be exactly $\ell-x$, while the value of each of the other $d-\ell$ MMS bundles is at least $\ell-x$. 
\item For each MMS bundle with value larger than $\ell-x$, arbitrarily pick one or more goods and decrease their value until the value of the MMS bundle becomes exactly $\ell-x$. Note that this does not change the MMS value.
\end{itemize}

After the normalization, the sum of values of all goods is
\begin{align*}
v_i(M) = 
&
\ (d-\ell+1)\cdot (\ell-x_i) + x_i 
\\
=& \ \ell + (d-\ell)(\ell-x_i) 
= d + (d-\ell)(\ell-1-x_i).
\end{align*}

Since $d = \goodsapp \geq (\ell+\frac{1}{2})n -\frac{1}{2} = \ell n + n/2 -1/2 $,
\begin{align}
\notag
v_i(M)
\geq &
\ (\ell n + n/2 -1/2 ) 
 + (\ell n + n/2 -1/2 -\ell)(\ell-1-x_i)
\\
\label{eq:total-value}
=&
\ n\cdot \ell
 ~~+~~ (n-1)\cdot \ell(\ell-1-x_i)
 ~~+~~ (n-1)\cdot (\ell-x_i)/2.
\end{align}

The goal of the algorithm is to give each agent $i$ a bundle $A_i$ with $v_i(A_i)\geq \ell$. We say that such a bundle is \emph{acceptable} for $i$.

\begin{example}[\textbf{Scaling}]
\label{exm:x}
To illustrate the parameter $x$, consider the following two instances with $n=5, \ell=3$ and  $d=\goodsapp=17$.
\begin{enumerate}
\item There are $17$ goods with the value of $1$. 
\item There are $16$ goods valued $1.2$ and one good with the value of $0.6$.
\end{enumerate}
Here, each MMS bundle contains a single good. In both cases, the value of every $3$ goods is at least $3$. 
In the first case $x=2$
and the total value is $5\cdot 3 + 4\cdot 3\cdot 0 + 4\cdot 1/2 = 17$.
In the second case, $x=1.8$
and the total value is $5\cdot 3 + 4\cdot 3\cdot 0.2 + 4\cdot 1.2/2 = 19.8$. $\blacksquare$
\end{example}

\subsection{Ordering the Instance}
\label{sub:ordering}
As in previous works \citep{Bouveret2016,barman2017approximation,garg2018approximating,huang2021algorithmic}, 
we apply a preliminary step in which the instance is \emph{ordered}, i.e., 
$v_i(g_1)\geq \cdots \geq v_i(g_m)$ for each agent $i\in N$.
Ordering is done as follows:
\begin{itemize}
\item Index the goods in $M$ arbitrarily $g_1,\ldots, g_m$.
\item Tell each agent $i$ to adopt, for the duration of the algorithm, a modified value function that assigns, to each good $g_j$, the value of the $j$-th most valuable good according to $i$.
For example, the new $v_i(g_1)$ should be the value of $i$'s most-valuable good; the new $v_i(g_m)$ should be the value of $i$'s least-valuable good; etc.
Ties are broken arbitrarily.
\end{itemize} 

During the execution of the algorithm, each agent answers all queries according to this new value function.
For example, an agent asked whether the bundle $\{g_1,g_4,g_5\}$ is acceptable, should answer whether the bundle containing his best good, 4th-best good and 5th-best good is acceptable.
Once the algorithm outputs an allocation, it can be treated as a \emph{picking sequence} in which, for example, an agent who receives the bundle $\{g_1,g_4,g_5\}$ has the first, fourth and fifth turns. It is easy to see that such an agent receives a bundle that is at least as good as the bundle containing her best, 4th-best and 5th-best goods. Hence, if the former is acceptable then the  latter is acceptable too. 


Clearly, given an \textit{un}ordered instance, 
its corresponding ordered instance 
can be generated in polynomial time
(for each agent $i\in [n]$, we need $O(m\log m)$ steps for ordering the valuations).
Given an allocation for the ordered instance, one can compute the allocation for the corresponding unordered instance in time $O(n)$,
using the picking-sequence  described above. 
%

\subsection{Restricted Lone Divider}
\label{sub:restricted-ld}

\begin{algorithm}[t]
\caption{
\label{alg:(L+1/2)n}
Finding an
$\ell$-out-of-$\goodsapp$ 
MMS allocation.
}
\begin{algorithmic}[1]
\REQUIRE An instance $\ins{N, M, V}$ and an integer $\ell \geq 1$.
\ENSURE An 
$\ell$-out-of-$\goodsapp$ 
MMS allocation.
\STATE  Scale the valuations of all agents as explained in Section \ref{sub:normalize-goods}.
\STATE  Order the instance as explained in Section \ref{sub:ordering}.
\STATE 
Run the Lone Divider algorithm (Algorithm \ref{alg:lone-divider-general}) with 
threshold values $t_i = \ell$ for all $i\in N$,
with the restriction that, in each partition made by the lone divider, all bundles must be $\ell$-balanced (Definition \ref{def:balanced-bundle}).
\end{algorithmic}
\label{alg:existenceMMS}
\end{algorithm}
In Section \ref{sub:lone-divider} we illustrated the limitations of the plain Lone Divider algorithm 
(Algorithm \ref{alg:lone-divider-general}).
We can improve its performance by restricting the partitions that the lone divider is allowed to make in Step 1 of Algorithm \ref{alg:lone-divider-general}.
Without loss of generality, we may assume (by adding dummy goods if needed) that $m\geq n\cdot \ell$.

For every $l\range{1}{\ell}$, denote
$G_l^n := \{g_{(l-1)n+1},\ldots,g_{l n}\}$. In other words, $G_1^n$ contains the $n$ most-valuable goods; $G_2^n$ contains the $n$ next most-valuable goods; and so on.
Since the instance is ordered, these sets are the same for all agents.
\begin{definition}[$\ell$-balanced bundle]
\label{def:balanced-bundle}
Given an ordered instance and an integer $\ell\geq 1$, a nonempty bundle $B\subseteq M$ is called \emph{$\ell$-balanced} if
\begin{itemize}
\item $B$ contains exactly one good from $G^n_1$.
\item If $|B|\geq 2$, then $B$ contains exactly one good from $G^n_2$.
\item If $|B|\geq 3$, then $B$ contains exactly one good from $G^n_3$.
\item $\ldots$
\item If $|B|\geq \ell$, then $B$ contains exactly one good from $G^n_{\ell}$.
\end{itemize}

Note that an $\ell$-balanced bundle contains at least $\ell$ goods.
The definition of $\ell$-balanced bundles only constrains the allocation of the first $\ell n$ goods; there may be arbitrarily many additional goods in $M \setminus \bigcup_{i = 1}^{\ell} G_{i}^{n}$, and they may be allocated arbitrarily.

\end{definition}
Algorithm \ref{alg:(L+1/2)n} requires the lone divider to construct a partition in which all $n$ bundles are $\ell$-balanced.

\begin{example}[\textbf{$\ell$-balanced bundles}]
Suppose there are five agents ($n=5$) and $m=20$ goods, where the value of each good $j\in[20]$ is precisely $j$ for all agents. Then,
a $1$-balanced bundle must contain a good $j \in \{20,19, 18, 17,16\}$;
a $2$-balanced bundle must contain a good from $\{20,19, 18, 17,16\}$ and a good from $\{15, 14, 13, 12, 11\}$;
a $3$-balanced bundle must contain, in addition to these, a good from $\{10, 9, 8, 7 ,6\}$; and so on. $\blacksquare$
\end{example}

\subsection{Construction for a Single Divider}
In order to prove the correctness of Algorithm \ref{alg:(L+1/2)n},
it is sufficient to prove that 
the threshold value $t_i = \ell$ is a \textit{reasonable threshold} (see Definition \ref{def:reasonable}) for each agent $i$, 
with the additional restriction that all bundles should be $\ell$-balanced.

 
To do this, it is sufficient to consider a single divider, Alice. We denote her normalized ordered value measure by $v$, and the sum of her $\ell-1$ least-valuable MMS bundles by $x$.
We consider a particular MMS partition for Alice, and refer to the bundles in this partition as the \emph{MMS bundles}.

Assume that $k$ unacceptable bundles $(B_c)_{c=1}^k$ have already been given to other agents
and that all these bundles are $\ell$-balanced.
Therefore, for each $c\in[k]$, it must be that $v(B_c)<\ell$.\footnote{
Recall that the Lone Divider algorithm allocates bundles using an envy-free matching. This means that all bundles allocated before Alice's turn are unacceptable to Alice.
} 
We have to prove that Alice can use the remaining goods to construct $n-k$ acceptable bundles that are also $\ell$-balanced.
Particularly, we prove below that Alice can construct $n-k$ acceptable bundles, each of which contains exactly $1$ remaining good from each of $G^n_1\ldots,G^n_{\ell}$.

\subsection{Main Idea: Bounding the Waste}
Given a bundle $B_a$, denote its \emph{waste} by $w(B_a) := v(B_a) - \ell$. 
This is the value the bundle contains beyond the acceptability threshold of $\ell$.
Note that the waste of acceptable bundles is positive and that of unacceptable bundles is negative.
The total initial value for Alice is given by 
\eqref{eq:total-value}. The total waste she can afford in her partition is therefore
\begin{align*}
v(M) - n\cdot \ell
= 
(n-1)\cdot (\ell-x)/2
+
(n -1)\cdot \ell(\ell-1-x).
 \end{align*}
The first term implies that she can afford an average waste of $(\ell-x)/2$ for 
$n-1$
bundles; the second term implies that she can afford an average waste of $\ell(\ell-1-x)$ for $n-1$ bundles. 

\begin{example}[\textbf{Bounding the waste}]
Consider Example \ref{exm:x}.
In case (1), the total value is $17$ and we need $5$ bundles with a value of $3$, so the affordable waste is $2$. The average over $4$ bundles is $0.5 = (3-2)/2 + 3\cdot 0$.
In case (2), the total value is $19.8$, so the affordable waste is $4.8$. The average over $4$ bundles is $1.2 = (3-1.8)/2 + 3\cdot (0.2)$.
In both cases, if there are $4$ acceptable bundles with that amount of waste, then the remaining value is exactly $3$, which is sufficient for an additional acceptable bundle. $\blacksquare$
\end{example}

The following lemma formalizes this observation.
\begin{lemma}
\label{lem:waste}
Suppose there exists a partition of $M$ into 
\begin{itemize}
\item 
Some $t\geq 0$ bundles with an average waste of at most $(\ell-x)/2 + \ell(\ell-1-x)$;
\item A subset $S$ of remaining goods, with $v(S)<\ell$.
\end{itemize}
Then $t \geq n$.
\end{lemma}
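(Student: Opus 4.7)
The plan is a straightforward double-counting argument on $v(M)$, comparing the lower bound coming from the scaling step against the upper bound implied by the hypothesized partition. I would start by abbreviating the per-bundle waste bound as $W := (\ell - x)/2 + \ell(\ell - 1 - x)$, so that the average-waste hypothesis on the $t$ bundles is equivalent to saying that their total waste is at most $tW$, and hence their total value is at most $t(\ell + W)$. Combining this with the hypothesis $v(S) < \ell$ on the leftover set produces the upper bound $v(M) < t(\ell + W) + \ell$.

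Next, I would read off from \eqref{eq:total-value} the lower bound
$v(M) \geq n\ell + (n-1)[\ell(\ell - 1 - x) + (\ell - x)/2] = n\ell + (n-1)W$.
Chaining the two inequalities gives $n\ell + (n-1)W < (t+1)\ell + tW$, which rearranges cleanly to $(n - t - 1)(\ell + W) < 0$. Since $t$ is an integer, the desired conclusion $t \geq n$ is equivalent to $n - t - 1 < 0$, so it only remains to verify the positivity $\ell + W > 0$.

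This last verification is the one small side-step that needs checking. I would argue $x \in [0, \ell - 1]$ as follows: the $\ell - 1$ bundles in the chosen MMS partition that together sum to $x$ are each at most as valuable as the $\ell$-th smallest MMS bundle, which has value $\ell - x$ after the scaling in Section \ref{sub:normalize-goods}; hence $x \leq (\ell - 1)(\ell - x)$, giving $x \leq \ell - 1$. This makes both summands in $W$ non-negative, so $\ell + W \geq \ell > 0$, which forces $n - t - 1 < 0$ and yields $t \geq n$.

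I do not foresee a real obstacle here: the argument is pure accounting and hinges only on applying \eqref{eq:total-value} in the right form. The most error-prone point is tracking the strict versus non-strict inequalities carefully so that the conclusion is $t \geq n$ rather than merely $t \geq n - 1$; the strictness is supplied by the hypothesis $v(S) < \ell$ together with integrality of $t$.
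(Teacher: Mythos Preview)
Your proof is correct and follows essentially the same double-counting argument as the paper: both combine the upper bound $v(M) < (t+1)\ell + tW$ from the partition hypothesis with the lower bound $v(M)\geq n\ell + (n-1)W$ from \eqref{eq:total-value}. The only cosmetic difference is in the last step---the paper argues that at least one of the two summands $(t+1)\ell$, $tW$ must exceed its counterpart $n\ell$, $(n-1)W$, whereas you factor the chained inequality as $(n-t-1)(\ell+W)<0$ and explicitly verify $\ell+W>0$ via $x\leq \ell-1$; your version is in fact slightly more careful about the sign issue.
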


\begin{proof}
For brevity, we denote $w := (\ell-x)/2 ~+~  \ell(\ell-1-x)$.
The \emph{total} value of the bundles equals their number times their \emph{average} value. 
So the total value of the $t$ bundles is at most $t\cdot \ell + t\cdot w$.
After adding $v(S) < \ell$  for the remaining goods, the sum equals $v(M)$, so
\begin{alignat*}{4}
& (t+1)\cdot \ell + t\cdot w > v(M) 
\\
\geq & n\cdot \ell + (n-1)\cdot w &&& \text{~~~~(by \eqref{eq:total-value}).}
\end{alignat*}
Therefore, at least one of the two terms in the top expression must be larger than the corresponding term in the bottom expression. This means that either $(t+1)\ell > n\ell$, or $t w  > (n-1)w$. Both options imply $t\geq n$.
\end{proof}

\begin{remark}
\label{rem:waste}
The value of each MMS bundle is at most $\ell-x$. Therefore, any bundle that is the union of exactly $\ell$ such MMS bundles has a waste of at most $\ell(\ell-x)-\ell = \ell(\ell-1-x)$ and thus it satisfies the upper bound of Lemma \ref{lem:waste}. 
In particular, this is satisfied for every bundle with at most $\ell$ goods.
\end{remark}

Below we show how Alice can find a partition in which the average waste is upper bounded as in Lemma \ref{lem:waste}. This partition will consist of the following bundles:
\begin{itemize}
\item The $k$ previously-allocated bundles, with wsate $<0$ (since they are unacceptable);
\item Some newly-constructed bundles with exactly $\ell$ goods and waste
$\leq \ell(\ell-1-x)$ (by Remark \ref{rem:waste});
\item Some newly-constructed bundles with waste at most $(\ell-x)/2$;
\item Some pairs of bundles, where the waste in one is larger than $(\ell-x)/2$ but the waste in the other is smaller than $(\ell-x)/2$, such that the average is at most $(\ell-x)/2$.
\end{itemize}

\subsection{Step 0: Bundles with Exactly $\ell$ Goods}
Recall that before Alice's turn, some $k$ bundles have been allocated, with a value of less than $\ell$. Hence, their waste is less than $0$.
Since these bundles are $\ell$-balanced, 
they contain exactly one good from each of $G^n_{1},\ldots,G^n_{\ell}$
(and possibly some additional goods).

Therefore, exactly $n-k$ goods are available in each of $G^n_1\ldots,G^n_{\ell}$. 

Next, Alice checks all the $\ell$-tuples containing one good from each of $G^n_1\ldots,G^n_{\ell}$ (starting from the highest-valued goods in each set). If the value of such an $\ell$-tuple is at least $\ell$, then it is acceptable and its waste is at most $\ell(\ell-1-x)$ by Remark \ref{rem:waste}.

After Step 0, there are
some $k' \geq k$ bundles with a waste of at most $\ell(\ell-1-x)$, each of which contains exactly one good from each of $G^n_1\ldots,G^n_{\ell}$. 
Of these, $k$ are previously-allocated bundles, and $k'-k$ are newly-constructed acceptable bundles.
In each of $G^n_1\ldots,G^n_{\ell}$, there remain exactly $n-k'$ goods. 
The total value of each $\ell$-tuple of remaining goods from $G^n_1\ldots,G^n_{\ell}$ is less than $\ell$.
Alice will now construct from them some $n-k'$ bundles with an average waste of at most $(\ell-x)/2 + \ell(\ell-1-x)$.
Lemma \ref{lem:waste} implies the following lemma on the remaining goods (the goods not in these $k'$ bundles):
\begin{lemma}
\label{lem:waste2}
Suppose there exists a partition of the remaining goods into 
\begin{itemize}
\item 
Some $t\geq 0$ bundles with an average waste of at most $(\ell-x)/2 + \ell(\ell-1-x)$;
\item A subset $S$ of remaining goods, with $v(S)<\ell$.
\end{itemize}
Then $t \geq n-k'$.
\end{lemma}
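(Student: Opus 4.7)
The plan is to prove Lemma \ref{lem:waste2} as an immediate consequence of Lemma \ref{lem:waste} by treating the Step~0 bundles together with the hypothesized partition as a single partition of all of $M$. Writing $w := (\ell-x)/2 + \ell(\ell-1-x)$ for the target threshold, the $k'+t$ assembled bundles together with the residual set $S$ partition all of $M$, and if I can show that their combined average waste is at most $w$, then Lemma \ref{lem:waste} applied to the whole of $M$ forces $k'+t \geq n$, which rearranges to $t \geq n-k'$ as claimed.

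The technical core is the average-waste verification. The $k$ bundles allocated before Alice's turn were deemed unacceptable, hence each has strictly negative waste; the $k'-k$ bundles constructed in Step~0 consist of exactly $\ell$ goods and so have waste at most $\ell(\ell-1-x)$ by Remark \ref{rem:waste}; and the $t$ hypothesized bundles have average waste at most $w$ by assumption. I would briefly note that $\ell-x \geq 0$ (indeed, $x \leq \ell-1$, since the $\ell-1$ least-valuable MMS bundles each have value at most $\ell-x$, which forces $x \leq (\ell-1)(\ell-x)$), so $w \geq \ell(\ell-1-x) \geq 0$. This implies that each of the $k'$ pre-existing bundles individually has waste at most $w$, and summing wastes and comparing to $(k'+t)\cdot w$ then yields the required average-waste bound on the combined collection.

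The likely obstacle is not algebraic difficulty but bookkeeping: keeping the pointwise waste bound on the $k'$ already-constructed bundles cleanly separate from the only-on-average bound on the $t$ hypothesized bundles, and verifying that $S$ plays the role of the same residual low-value set in the combined partition used to invoke Lemma \ref{lem:waste}. Once these pieces are aligned, Lemma \ref{lem:waste2} follows with essentially a single line of algebra on top of Lemma \ref{lem:waste}.
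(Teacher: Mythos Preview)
Your proposal is correct and follows precisely the approach the paper intends: the paper gives no separate proof of Lemma~\ref{lem:waste2}, merely stating that it is implied by Lemma~\ref{lem:waste} applied to the partition of $M$ obtained by adjoining the $k'$ Step~0 bundles to the hypothesized $t$ bundles and $S$. Your verification that each of the $k'$ bundles has waste at most $w$ (using $x\leq \ell-1$ so that $w\geq \ell(\ell-1-x)\geq 0$) is exactly the bookkeeping the paper leaves implicit.
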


Alice's strategy branches based on the number of \emph{high-value goods}.

\subsection{High-value Goods}
\label{sub:high-value-goods}
We define high-value goods as goods $g$ with $v(g)>(\ell-x)/2$. 
Denote by $h$ the number of high-valued goods in $M$.
Since the instance is ordered, goods $g_1,\ldots, g_h$ are high-valued.
All MMS bundles are worth at most $\ell-x$, and therefore may contain at most one high-value good each.
Since the number of MMS bundles is $\ell n + n/2$, we have $h\leq \ell n + n/2$.

For each $j\in[h]$, we denote
\begin{itemize}
\item $M_j$ := the MMS bundle containing $g_j$. Since the value of all MMS bundles is at most $(\ell-x)$, 
each MMS bundle contains at most one high-value good, so the $M_j$ are all distinct.
\item $R_j$ := the \emph{remainder set} of $g_j$, i.e., the set $M_j \setminus \{g_j\}$.
\item $r_j := v(R_j)$.
\end{itemize}

We consider three cases, based on the number of high-value goods.

\paragraph{\textbf{Case \#1}: $h\leq \ell n$.}
This means that all high-value goods are contained in $G^n_1\cup\cdots \cup G^n_{\ell}$,
so after removing  $(B_c)_{c=1}^{k'}$, at most $\ell n-\ell k'$ high-value goods remain---at most $n-k'$ in each of $G^n_1\ldots,G^n_{\ell}$. 
Alice constructs the required bundles by \emph{bag-filling}---a common technique in MMS approximations (e.g. \citet{garg2018approximating}).

\begin{itemize}
\item Repeat at most $n-k'$ times:
\begin{itemize}
\item Initialize a bag with 
a good from each of $G^n_1\ldots,G^n_{\ell}$
(Step 0 guarantees that the total value of these goods is less than $\ell$).
\item Fill the bag with goods from outside $G^n_1\ldots,G^n_{\ell}$. Stop when either no such goods remain, or the bag value raises above $\ell$.
\end{itemize}
\end{itemize}
Since all goods used for filling the bag have a value of at most $(\ell-x)/2$, the waste of each constructed bundle is at most $(\ell-x)/2$.
By construction, all these bundles are acceptable except the last one.
Apply Lemma \ref{lem:waste2} with $S$ being the set of goods remaining in the last bag, 
and $t$ being the number of acceptable bundles constructed by bag-filling.
The lemma implies that $t\geq n-k'$. 

\paragraph{\textbf{Case \#2}: $k' \geq n/2$.}
Alice uses bag-filling as in Case \#1.

Here, the waste per constructed bundle might be more than $(\ell-x)/2$. However, since the value of a single good is at most $\ell-x$, the waste of each constructed bundle is at most $\ell-x$.

In each of the $k'$ bundles of Step 0, the waste is at most $\ell(\ell-1-x)$. Since $k'\geq n/2 \geq n-k'$, the \emph{average} waste per bundle is at most $\ell(\ell-1-x)+(\ell-x)/2$.
Hence Lemma \ref{lem:waste2} applies, and at least $n-k'$ acceptable bundles are constructed.

\paragraph{\textbf{Case \#3}: $h > \ell n$ and $k'<n/2$.}
In this case, Alice will have to construct some bundles with waste larger than $(\ell-x)/2$. However, she will compensate for it by constructing a similar number of bundles with waste smaller than $(\ell-x)/2$, such that the average waste per bundle remains at most $(\ell-x)/2$.

After removing  $(B_c)_{c=1}^{k'}$, exactly $h-\ell k'$ high-value goods remain. They can be partitioned into two subsets:
\begin{itemize}
\item $H_+ := $ the $(n-k')\ell$ top remaining goods --- those contained in $G^n_1\cup\cdots \cup G^n_{\ell}$;  exactly $n-k'$ in each of $G^n_1\ldots,G^n_{\ell}$. By assumption, $n-k' > n/2$.
\item $H_- := $ the other $h-\ell n$ high-value goods --- those not contained in $G^n_1\cup\cdots \cup G^n_{\ell}$.
Since $h\leq \ell n + n/2$, the set $H_-$ contains at most $n/2$ goods. 
\end{itemize}

This is the hardest case; to handle this case, we proceed to Step 1 below.

\subsection{Step 1: Bundling High-value Goods.}
Alice constructs 
at most
$|H_-|$ bundles as follows.
\begin{itemize}
\item Repeat while $H_-$  is not empty:
\begin{itemize}\item Initialize a bag with the lowest-valued remaining good from each of $G^n_1\ldots,G^n_{\ell}$
(Step 0 guarantees that their total value is less than $\ell$).
\item Fill the bag with goods from $H_-$, until the bag value raises above $\ell$.
\end{itemize}
\end{itemize}
Note that $|H_-| \leq n/2 < n-k' = |G^n_1| = \ldots = |G^n_{\ell}|$, so as long as $H_-$ is nonempty, 
each of $G^n_1\ldots,G^n_{\ell}$ is nonempty too, and Alice can indeed repeat.
By construction, all filled bags except the last one are valued at least $\ell$; it remains to prove that the number of these bags is sufficiently large.

Let $s$ be the number of acceptable bundles  constructed once $H_-$ becomes empty.
Note that, in addition to these bundles, there may be an \emph{incomplete bundle} --- the last bundle, whose construction was terminated while its value was still below $\ell$.

Let $P_+ \subseteq H_+$ be the set of $s \ell$ goods from $H_+$ in the acceptable bundles.
Denote by $P_- \subseteq H_-$ the set of $s$ goods from $H_-$ that were added \emph{last} to these $s$ bundles (bringing their value from less-than-$\ell$ to at-least-$\ell$).
Note that the waste in each of these $s$ bundles might be larger than $(\ell-x)/2$, but it is at most the value of a single good from $P_-$, so the total waste is at most $\sum_{j  \in P_-} v(g_{j})$.

After this step, 
besides the $s$ acceptable bundles, there are 
some $(n-k'-s)\ell$ high-value goods remaining in $H_+$
(some $\ell$ of these goods are possibly in the incomplete bundle, if such a bundle exists).
Alice now has to construct from them some $n-k'-s$ acceptable bundles.

\subsection{Step 2: Using the Remainders.}
Alice now constructs acceptable bundles by bag-filling. She initializes each bag with 
the incomplete bundle from Step 1 (if any), or with an $\ell$-tuple of unused goods from  $G^n_1\ldots,G^n_{\ell}$.
Then, she fills the bag with low-value goods from the following \emph{remainder sets}:

\begin{itemize}
\item 
There are $\ell k$ remainder sets that correspond to the $\ell k$ goods allocated within the 
$k$ unacceptable bundles $(B_c)_{c=1}^k$.
We denote them by $R^U_{c,1},\ldots,R^U_{c,\ell}$ 
and their values by 
$r^U_{c,1},\ldots,r^U_{c,\ell}$ 
for $c\in[k]$.
\item  
There are $\ell s + s$ remainder sets that correspond
to the 
$\ell s$ high-value goods in $P_+$ and the $s$ high-value goods in $P_-$.
We denote them by $R^P_{j}$ 
and their values by $r^P_{j}$ 
for $j\in P_+ \cup P_-$.
\end{itemize}

By definition, the total value of all these remainders is:
\begin{align*}
\textsc{Total-Remainder-Value}
&=
v
\left(
\bigcup_{j \in P_+ \cup P_-} R^P_{j}
~~\cup~~ 
\bigcup_{c=1}^{k} \bigcup_{l=1}^{\ell} R^U_{c,l}
\right) 
\\
&= 
\sum_{j \in P_+ \cup P_-} r^P_{j} 
~~+~~ 
\sum_{c=1}^{k} \sum_{l=1}^{\ell}
 r^U_{c,l}.
\end{align*}
For each remainder-set $R_j$, denote by $R'_j$, the subset of $R_j$ that remains after removing the at most $k$ unacceptable bundles $(B_c)_{c=1}^k$ with more than $\ell$ goods.%
\footnote{
Bundles $B_c$ with at most $\ell$ goods do not consume anything from the remainder-sets $R_j$, since they contain only high-value goods from $G^n_1,\ldots,G^n_{\ell}$.
From the same reason, the $k'-k$ acceptable bundles constructed in Step 0 do not consume anything from the remainder sets.
}
Each unacceptable bundle $B_c$ contains, in addition to the $\ell$ high-value goods $g_{c,l}$ for $l\range{1}{\ell}$, some low-value goods with a total value of less than $\sum_{l=1}^\ell r^U_{c,l}$ (since the total value of the unacceptable bundle is less than $\ell$).
Therefore, the total value of low-value goods included in these unacceptable bundles is at most $\sum_{c=1}^k \sum_{l=1}^\ell r^U_{c,l}$ (equality holding iff $k=0$). Therefore, the total remaining value satisfies
\begin{align}
\notag
\textsc{Total-Remainder-Value}
&=
v
\left(
\bigcup_{j  \in P_+ \cup P_-} R'^P_{j}
~~\cup~~ 
\bigcup_{c=1}^{k} \bigcup_{l=1}^{\ell} R'^U_{c,l}
\right) 
\\
\notag
&\geq
\left(
\sum_{j  \in P_+ \cup P_-} r^P_{j} 
~~+~~ 
\sum_{c=1}^{k} \sum_{l=1}^{\ell} r^U_{c,l}
\right)
-
\sum_{c=1}^{k} \sum_{l=1}^{\ell} r^U_{c,l}
\\
&=
\sum_{j  \in P_+ \cup P_-} r^P_{j}
.
\label{eq:total-value-remainders}
\end{align}

The bag-filling proceeds as follows. 

\begin{enumerate}
\item Initialize $a := 1$.
\item Initialize a bag with either the incomplete bundle from Step 1 (if any), 
or some $\ell$ unused top goods.
We denote the $\ell$ top goods used for initializing bag $a$
by $g_{j[a,1]}\in G^n_1,\ldots,g_{j[a,\ell]}\in G^n_{\ell}$.
\item Add to the bag the remainder-sets $R'^P_{j}$ and $R'^U_{c,l}$ in an arbitrary order. Stop when either no such remainder-sets remain, or the bag value raises above $\ell$.
\item If there are still some unused remainder-sets and high-value goods, let $a := a+1$ and go back to Step 2.
\end{enumerate}

The bag-filling stops when either there are no more high-value goods, or no more remainder-sets.
In the former case, Alice has all $n-k'$ required bundles
($s$ from Step 1 and $n-k'-s$ from Step 2), and the construction is done. 
We now analyze the latter case.

By construction, we go to the next bag only after the current bag becomes at least $\ell$. Therefore, all bags except the last one are valued at least $\ell$. Our goal now is to prove that the number of these ``all bags except the last one'' is sufficiently large.

Let $t$ be the number of bundles constructed with a value of at least $\ell$.
For each $a\in[t]$, 
The $a$-th bag contains the high-value goods
$g_{j[a,1]},\ldots,g_{j[a,\ell]}$
and some remainder-sets.
How much remainder-sets should it contain?
Suppose it contains remainder-sets with a total value of $\sum_{l=1}^\ell r_{j[a,l]}$.
Then, the total bundle value is 
$\sum_{l=1}^\ell v(M_{j[a,l]})$. 
By assumption, the total value of every $\ell$ MMS bundles is at least $\ell$, so the bundle value is at least $\ell$.
Therefore, to make bundle $a$ acceptable, it is sufficient to add to it a value of $\sum_{l=1}^\ell r_{j[a,l]}$.

Denote by $j[a,*]$ the index of the last remainder-set added to bag $a$ (bringing its value from less-than-$\ell$ to at-least-$\ell$).
The total value of remainder-sets in the bag is thus 
less than $r_{j[a,*]} + \sum_{l=1}^\ell r_{j[a,l]}$. 

The total value of remainder-sets in the unfilled $(t+1)$-th bag is less than 
$\sum_{l=1}^\ell r_{j[t+1,l]}$,
where $j[t+1,1],\ldots,j[t+1,\ell]$ are indices  of some remaining high-value goods.
Therefore, the total value of remainder-sets 
in all $t+1$ bags together 
satisfies 
\begin{align}
\notag
v
\left(
\bigcup_{j  \in P_+ \cup P_-} R'^P_{j}
~~\cup~~ 
\bigcup_{c=1}^{k} \bigcup_{l=1}^{\ell} R'^U_{c,l}
\right) 
&~~<~~
\sum_{a=1}^t \left( 
r_{j[a,*]} 
+
\sum_{l=1}^\ell r_{j[a,l]}
\right)
~~+~~ 
\left(
\sum_{l=1}^\ell r_{j[t+1,l]}
\right)
\\
\label{eq:bagvalue}
&=
\left(\sum_{a=1}^{t+1} 
\sum_{l=1}^\ell r_{j[a,l]}
\right)
~~+~~ \left(\sum_{a=1}^t 
r_{j[a,*]}
 \right).
\end{align}

Combining \eqref{eq:total-value-remainders} and \eqref{eq:bagvalue} gives
\begin{align}
\left(\sum_{a=1}^{t+1} 
\sum_{l=1}^\ell r_{j[a,l]}
\right)
+ \left(\sum_{a=1}^t 
r_{j[a,*]}
 \right)
>
\sum_{j  \in P_+ \cup P_-} r^P_{j}
.
\end{align}
In the left-hand side there are 
$\ell(t+1)+t = (\ell+1)t+\ell$ 
terms, 
while in the right-hand side there are 
$(\ell+1)s$ terms --- $\ell+1$ for each bundle constructed in Step 1.
We now show that each term in the left-hand side is equal or smaller than a unique term in the right-hand side. Since the left-hand side is overall larger than the right-hand side, this indicates that the left-hand side must have more terms, that is, 
$(\ell+1)t+\ell > (\ell+1)s$.
This implies that $t\geq s$, i.e., Alice has successfully constructed from the remainder-sets some
$s$
acceptable bundles.
\begin{itemize}
\item Consider first the $\ell (t+1)$ terms $r_{j[a,l]}$, and compare them to $r^P_{j}$ for $j \in P_+$. 
Since the bundles in Step 1 were constructed in ascending order of value, starting at the lowest-valued available goods in 
each of $G^n_1\ldots,G^n_{\ell}$,
every index $j[a,l]$ is smaller than any index $j \in G^n_l$.
Therefore, every term $r_{j[a,l]}$ is smaller than some unique term $r^P_{j}$ for $j \in G^n_l$, for every $l\in[\ell]$.
\item Consider now the $t$ terms $r_{j[a,*]}$, and compare them  to $r^P_{j}$ for $j \in P_-$.
Each of the indices $j[a,*]$ is an index of some unique remainder-set, so it is either equal to some unique index $j\in P_+\cup P_-$, or to some unique index $_{c,l}$ (the index some remainder-set $R^U_{c,l}$ of some unacceptable bundle $B_c$). All indices $_{c,l}$ are in $\{1,\ldots, \ell n\}$, so they are smaller than the indices $j \in P_-$
.
Therefore, every $r_{j[a,*]}$ is either equal or smaller than some unique term $r^P_{j}$ for $j\in P_-$.
\end{itemize}
So Alice has $s$ new acceptable bundles.
The waste of each of these is $r_{j[a,*]}$, which --- as mentioned above --- is equal to or smaller than some unique term $r^P_{j}$ for $j\in P_-$.
Therefore, the total waste of all these $s$ bundles is at most the following sum of $s$ terms: $\sum_{j \in P_-} r^P_{j}$.

Recall that the waste of each of the $s$ acceptable bundles from Step 1 was at most $v(g_{j})$ for some $j\in P_-$. Therefore, the total waste of the $2 s$ acceptable bundles constructed so far is at most 
\begin{align*}
&\sum_{j \in P_-} r^P_{j} + \sum_{j \in P_-} v(g_{j})
\\
=&
\sum_{j \in P_-} (r^P_{j} + v(g_{j}))
\\
=&
\sum_{j \in P_-} v(M_{j})
\\
\leq &
\sum_{j \in P_-} (\ell-x) && \text{by the normalization (Section~\ref{sub:normalize-goods})}
\\
=&
|P_-|\cdot (\ell-x)
\\
=&
s\cdot (\ell-x).
\end{align*}
Therefore, the average waste per bundle is at most $s(\ell-x)/(2s) = (\ell-x)/2$.

\subsection{Step 3: Plain Bag-filling.}
At this stage, there are no more high-value goods outside $H_+$. Therefore, Alice can construct the remaining bundles by plain bag-filling, initializing each bag with some $\ell$-tuple of unused goods remaining in $H_+$, and filling it with some low-value goods outside $H_+$.
Since the waste in each bundle is at most $(\ell-x)/2$, Lemma \ref{lem:waste2} implies that the total number of constructed bundles is $n-k'$.

This completes the proof that  
$\ell$ is a reasonable threshold for Algorithm \ref{alg:(L+1/2)n}.
Therefore, the algorithm finds the allocation promised in Theorem \ref{thm:l-out-of-d-existence}. 

\subsection{Limits of Algorithm \ref{alg:(L+1/2)n}}
To illustrate the limitation of Algorithm \ref{alg:(L+1/2)n},
we show that
it cannot guarantee 
$1$-out-of-$((\ell+\frac{1}{2}) n -2)$ 
MMS. For simplicity we assume that $n$ is even so that $(\ell+\frac{1}{2})n$ is an integer.
\begin{example}[\textbf{Tight bound for our technique}]
\label{exm:lone-divider-2}
Suppose that in the first iteration all agents except the divider have the following MMS bundles:
\begin{itemize}
\item $\ell n-1$ bundles are made of two goods with values $1-\varepsilon$ and $\varepsilon$.
\item One bundle is made of two goods with values $1-\ell n\varepsilon$ and $\ell n\varepsilon$.
\item $n/2-2$ bundles are made of two goods with values $1/2,1/2$.
\end{itemize}
So their  $1$-out-of-$(\ell n + n/2-2)$ MMS equals $\ell$.
However, it is possible that the first divider takes an unacceptable bundle containing 
$\ell-1$ goods of value $1-\varepsilon$,
the good of value $1-\ell n\varepsilon$, 
and the $\ell n-1$ goods of value $\varepsilon$.
Note that this bundle is $\ell$-balanced.
All remaining goods have a value of less than $1$,
so an acceptable bundle requires at least $\ell+1$ goods.
However, the number of remaining goods is only
$\ell n - \ell + 1 + n - 4 = (\ell+1)(n-1) -  2 $:
$\ell n-\ell$ goods of value $1-\varepsilon$,
one good of value $\ell n\varepsilon$ and $n-4$ goods of value $1/2$. Hence, at most $n-2$ acceptable bundles can be constructed. $\blacksquare$
\end{example}

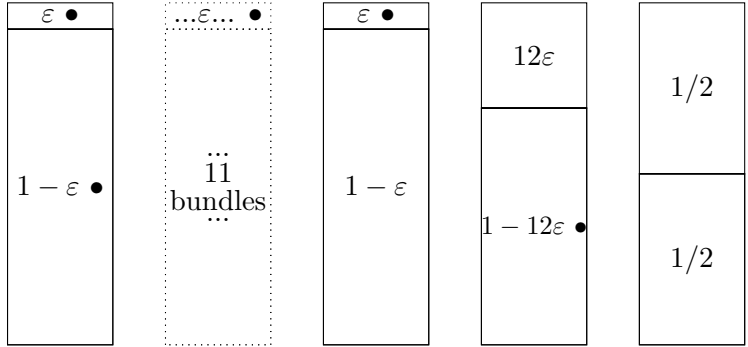
\begin{figure}
\begin{center}
\begin{tikzpicture}[scale=0.7]
\mmsbundle{0}{6}{$1-\varepsilon$ $\bullet$}{6.5}{$\varepsilon$ $\bullet$}

\draw [dotted] (3,0) rectangle +(2,6)
node[pos=.5] {\shortstack{$...$\\$11$\\bundles\\$...$}};
\draw [dotted] (3,0) rectangle +(2,6)
rectangle +(0,6.5) node[pos=.5] {...$\varepsilon$... $\bullet$};

\mmsbundle{6}{6}{$1-\varepsilon$}{6.5}{$\varepsilon$ $\bullet$}

\mmsbundle{9}{4.5}{\small$1-12 \varepsilon$ $\bullet$}{6.5}{$12 \varepsilon$}

\mmsbundle{12}{3.25}{$1/2$}{6.5}{$1/2$}

 \end{tikzpicture}
\end{center}
\caption{
\label{fig:lone-divider-2}
An illustration of the goods' values in Example \ref{exm:lone-divider-2}, for $n=6$ and $\ell=2$.
Here, $(\ell+1/2)n-2 = 13$. The $2$-out-of-$13$ MMS is $2$.
Each rectangle represents an MMS bundle containing two goods.
The first divider takes the goods marked by a bullet. Note that this is a 2-balanced bundle. The next divider cannot construct $5$ bundles of value 2 from the remaining goods.
}
\end{figure}

\section{Ordinal Approximation for Goods in Polynomial Time}
\label{sec:goods-poly}
Algorithm \ref{alg:(L+1/2)n} guarantees that each agent receives an $\ell$-out-of-$d$ MMS allocation for $d \geq \goodsapp$. However, the algorithm requires exact MMS values to determine whether a given bundle is acceptable to each agent. Since computing an exact MMS value for each agent is NP-hard, Algorithm \ref{alg:(L+1/2)n} does not run in polynomial-time even for the case of $\ell = 1$.
The objective of this section is to develop  polynomial-time approximation algorithms for computing $\ell$-out-of-$d$ MMS allocations.

We utilize optimization techniques used in the \emph{bin covering} problem. This problem was presented by \citet{assmann1984dual} as a dual of the more famous \emph{bin packing} problem. 
In the bin covering problem, the goal is to fill bins with items of different sizes, such that the sum of sizes in each bin is at least $1$, and subject to this, the number of bins is maximized. This problem is NP-hard, but several approximation algorithms are known.
These approximation algorithms typically accept a bin-covering instance $I$ as an input and fill at least
 $a\cdot (OPT(I) - b)$ bins, 
where $a<1$ and $b>0$ are constants,
and $OPT(I)$ is the maximum possible number of bins in $I$.
Such an algorithm can be used directly to find an ordinal approximation of an MMS allocation when all agents have \emph{identical} valuations. Our challenge is to adapt them to agents with \emph{different} valuations.

\subsection{The case when $\ell=1$}
\label{sub:l=1}

\begin{algorithm}[t]
\caption{
\label{alg:3n/2-polytime}
Bidirectional bag-filling
}
\begin{algorithmic}[1]
\REQUIRE An instance $\ins{N, M, V}$ 
and threshold values $(t_i)_{i=1}^n$.

\ENSURE At most $n$ subsets $A_i$ satisfying $v_{i}(A_{i}) \geq t_{i}$.


\STATE  Order the instance in \textbf{descending order} of value as in Section \ref{sub:ordering}, so that for each agent $i$,
$v_i(g_1)\geq \cdots \geq v_i(g_m)$.


\FOR{$k = 1,2,\ldots$:}
\STATE 
\label{step:new-bag}
Initialize a bag with the good $g_k$. 
\STATE 
Add to the bag zero or more remaining goods in \textbf{ascending order} of value, 
until at least one agent $i$ values the bag at least $t_i$.
\STATE 
Give the goods in the bag to an arbitrary agent $i$ who values it at least $t_i$.
\STATE  If every remaining agent $i$ values the remaining goods at less than $t_i$, stop.
\ENDFOR
\end{algorithmic}
\end{algorithm}

For the case when $\ell=1$, we adapt the algorithm of \citet{csirik1999two}, which finds a covering with at least $\frac{2}{3}\cdot (OPT(I)-1)$ bins (an approximation with $a=\frac{2}{3}$ and $b=1$).
Algorithm \ref{alg:3n/2-polytime} generalizes the aforementioned algorithm to MMS allocation of goods.
Thus, the algorithm of \citet{csirik1999two} corresponds to a special case of Algorithm \ref{alg:3n/2-polytime} wherein
\begin{itemize}
\item  All agents have the same $v_i$ (describing the item sizes); and
\item  All agents have the same $t_i$ (describing the bin size).\footnote{
There is a minor difference: 
we initialize the first bag with only a single good from the left ($g_1$) before filling it with goods from the right ($g_m,g_{m-1},\ldots$).
In contrast, \citet{csirik1999two} fill the first bag with several goods from the left ($g_1,g_2,\ldots$ while its value is less than the bin size),
and only then start filling it with goods from the right.
However, this difference is not substantial: their proof of the approximation ratio assumes only that each bin has at least one good from the left and one good from the right, so the same proof holds for our variant.
}
\end{itemize}

For this case, we have the following lemma:
\begin{lemma}[Lemma 4 of \citet{csirik1999two}]
\label{lem:csirik}
When all agents have the same valuation $v$ and the same threshold $t$, Algorithm \ref{alg:3n/2-polytime} allocates at least $\frac{2}{3}(OPT(v,t)-1)$ bundles, where $OPT(v,t)$ is the maximum number that can be filled.
\end{lemma}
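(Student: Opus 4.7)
The plan is to adapt the argument for the bidirectional bin-covering algorithm of \citet{csirik1999two}. Since all agents share a common valuation $v$ and threshold $t$, the lemma reduces to a pure bin-covering claim: if the items of $M$ admit a partition into $k = OPT(v,t)$ sets of value $\geq t$, then Algorithm \ref{alg:3n/2-polytime} returns at least $\frac{2}{3}(k-1)$ such sets.

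First I would fix notation. Let $A$ denote the number of bags the algorithm outputs, call them $B_1,\ldots,B_A$ in the order they are formed, and let $R$ be the set of items unused at termination. By the stopping rule $v(R) < t$, and by construction $v(B_j) \geq t$ for every $j$. For each bag $B_j$ of size at least two, the \emph{waste} $w_j := v(B_j) - t$ is bounded by the value of the \emph{last} item added, because the bag was strictly below $t$ just before that item joined; and since items are added in ascending order of value, this last item is the smallest one in $B_j$.

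Second I would classify items as \emph{big} (value $> t/2$) and \emph{small} (value $\leq t/2$). A structural observation about any valid cover---including an optimal one---is that every covered bin contains either at least one big item or at least two small items, because $t$ cannot be reached by a single small item. This yields the bound $OPT \leq n_{\text{big}} + \lfloor n_{\text{small}}/2\rfloor$ and also constrains the algorithm: any bag whose last-added item is small has waste at most $t/2$, while a bag consisting of a single big initializer $g_j$ spends only one big item.

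Third, and this is the crux, I would run a charging argument against an optimal cover by splitting algorithm bags into two regimes: those whose initializer satisfies $v(g_j) \geq t$ (matched one-to-one against optimal bins, since each uses a single big item that any cover must use), and those of the second type, whose waste is at most the smallest item inside, hence at most $t/2$. The key step is to show that every three consecutive items the algorithm consumes account for at most two optimal bins, which is exactly where the $2/3$ factor arises; the additive loss of $1$ comes from the leftover $R$, whose total value is below $t$ and could have contributed at most one extra bin to $OPT$. The main obstacle will be this charging: one has to handle the interaction between bags of the two types and keep the double-counting across the big/small categories tight enough to yield $\frac{2}{3}$ rather than a weaker constant. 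This case analysis is precisely the content of Lemma~4 in \citet{csirik1999two}, whose proof transfers verbatim to our setting because the identical-valuation assumption collapses the problem to their bin-covering framework.
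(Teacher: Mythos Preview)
The paper does not prove this lemma at all: it is stated as ``Lemma~4 of \citet{csirik1999two}'' and used as a black box, with only a footnote remarking that the minor variant here (one left item per bag rather than several) still satisfies the hypothesis of their proof. Your proposal ultimately lands in the same place---you defer to \citet{csirik1999two} for the actual case analysis---so at the level of what is being claimed as proved, your approach and the paper's coincide.

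That said, the sketch you give before deferring contains a concrete slip. You write that ``since items are added in ascending order of value, this last item is the smallest one in $B_j$.'' This is backwards: right-side items are added smallest first, so the \emph{last} right item placed in a bag is the \emph{largest} among the right items in that bag (though still no larger than the initializer $g_k$). The correct and useful bound is simply that the waste is at most the value of that last right item; its rank within the bag is irrelevant, and the $2/3$ factor in \citet{csirik1999two} comes from a value-counting argument (each filled bag has value less than $t$ plus one right item, and the right items used as ``last'' items are themselves controlled), not from the ``three consecutive items account for at most two optimal bins'' heuristic you describe. If you intend to actually reproduce the proof rather than cite it, those two points would need to be fixed; if you are content to cite, then your proposal already matches the paper.
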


Note that Algorithm \ref{alg:3n/2-polytime} works for any selection of the threshold values $t_i$, but if the thresholds are too high, it might allocate fewer than $n$ bundles. Our challenge now is to compute thresholds for which $n$ bundles are allocated. 
To compute a threshold for agent $i$, we simulate Algorithm \ref{alg:3n/2-polytime} using $n$ clones of $i$, that is, $n$ agents with  valuation $v_i$. We look for the largest threshold for which this simulation allocates at least $n$ bundles.
\begin{definition}
The \emph{1-out-of-$n$ bidirectional-bag-filling-share of agent $i$}, denoted $\BBFSA{n}_i$, is the largest value $t_i$ for which
Algorithm \ref{alg:3n/2-polytime} allocates at least $n$ bundles when executed with 
$n$ agents with identical valuation $v_i$ and identical threshold $t_i$.
\end{definition}
\label{def:bbfs}
The BBFS of agent $i$ can be computed using binary search up to $\varepsilon$, where $\varepsilon$ is the smallest difference between values that is allowed by their binary representation.
The following lemma relates the BBFS to the MMS.
\begin{lemma}
\label{lem:bbfs}
For any integer $n\geq 1$ and agent $i\in[n]$,
\begin{align*}
\BBFSA{n}_i
\geq
\mms{i}{1}{\ceil{\frac{3}{2}n}}{M}.
\end{align*}
\end{lemma}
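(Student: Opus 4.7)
The plan is to invoke Lemma 4 of \citet{csirik1999two} (restated here as Lemma \ref{lem:csirik}) with threshold equal to the MMS value we want to lower-bound, and then verify that the guaranteed number of bundles is at least $n$.

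Concretely, I would set $t := \mms{i}{1}{\ceil{3n/2}}{M}$ and consider the identical-valuation instance in which all $n$ agents are clones of $i$ (valuation $v_i$) with common threshold $t$. By the very definition of the $1$-out-of-$\ceil{3n/2}$ MMS, agent $i$ can partition $M$ into $\ceil{3n/2}$ bundles each of $v_i$-value at least $t$; hence the bin-covering optimum $\textsc{OPT}(v_i, t)$ for this identical instance satisfies
\begin{align*}
\textsc{OPT}(v_i, t) \geq \ceil{3n/2}.
\end{align*}
Applying Lemma \ref{lem:csirik}, Algorithm \ref{alg:3n/2-polytime} run on this clone instance allocates at least
\begin{align*}
\tfrac{2}{3}\bigl(\textsc{OPT}(v_i, t) - 1\bigr) \;\geq\; \tfrac{2}{3}\bigl(\ceil{3n/2} - 1\bigr)
\end{align*}
bundles.

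The next step is a small arithmetic check. When $n$ is even, $\ceil{3n/2} = 3n/2$, so the bound is $\tfrac{2}{3}(3n/2 - 1) = n - \tfrac{2}{3}$. When $n$ is odd, $\ceil{3n/2} = (3n+1)/2$, so the bound is $\tfrac{2}{3}\cdot (3n-1)/2 = n - \tfrac{1}{3}$. In either case, the number of bundles allocated is an integer which is at least $n - \tfrac{2}{3}$, hence at least $n$.

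Thus $t$ is a feasible threshold in the definition of $\BBFSA{n}_i$ (i.e., Algorithm \ref{alg:3n/2-polytime} on $n$ clones with threshold $t$ allocates $\geq n$ bundles), so the supremum $\BBFSA{n}_i$ is at least $t = \mms{i}{1}{\ceil{3n/2}}{M}$, as required. The only real content is that the factor $2/3$ loss in Csirik's bound, together with the additive $-1$, is exactly absorbed by the gap between $n$ and $\ceil{3n/2}$; there is no combinatorial obstacle to overcome beyond invoking the prior lemma and checking this inequality.
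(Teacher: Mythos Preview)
Your proof is correct and follows essentially the same route as the paper: set $t$ to the MMS value, invoke Lemma~\ref{lem:csirik} to get at least $\tfrac{2}{3}(\ceil{3n/2}-1)$ bundles, and use integrality to conclude this is at least $n$. The only difference is that you spell out the even/odd arithmetic explicitly, whereas the paper leaves that step implicit.
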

\begin{proof}
Let $t_i := \mms{i}{1}{\ceil{\frac{3}{2}n}}{M}$.
By definition of MMS, there is a partition of $M$ into $\ceil{\frac{3}{2}n}$ bundles of size at least $t_i$.
By Lemma \ref{lem:csirik},
the Bidirectional-Bag-Filling algorithm
with valuation $v_i$ and bin-size $t_i$
fills at least 
$\frac{2}{3}(\ceil{\frac{3}{2}n}-1)$
bundles, which means at least $n$ bundles since the number of bundles is an integer.
By definition of the BBFS, since Algorithm \ref{alg:3n/2-polytime} allocates at least $n$ bundles with threshold $t_i$, we have $t_i\leq \BBFSA{n}_i$.

%
\end{proof}

We define an allocation as \emph{BBFS-fair} if it allocates to each agent $i\in[n]$ a bundle with a value of at least $\BBFSA{n}_i$.
Lemma \ref{lem:bbfs} indicates that a BBFS-fair allocation is also 1-out-of-$\ceil{3n/2}$ MMS-fair, though the BBFS may be larger than
1-out-of-$\ceil{3n/2}$ MMS.

\begin{lemma}
\label{lem:bbsfair}
A BBFS-fair allocation always exists, and can be found in time polynomial in the length of the binary representation of the problem.
\end{lemma}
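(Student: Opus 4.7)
The plan is to first compute $\BBFSA{n}_i$ for each agent $i$ and then invoke Algorithm~\ref{alg:3n/2-polytime} on the full heterogeneous instance with thresholds $t_i := \BBFSA{n}_i$.

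First I would compute each $\BBFSA{n}_i$ by binary search on the threshold $t$. The predicate ``Algorithm~\ref{alg:3n/2-polytime} run on $n$ clones of $i$ with threshold $t$ allocates at least $n$ bundles'' is monotone non-increasing in $t$, and each feasibility check is a single invocation of the algorithm, which runs in time polynomial in $m$ and $n$. The candidate thresholds are among the finitely many subset-sums of $\{v_i(g) : g \in M\}$, so binary search over the binary representation of the valuations locates $\BBFSA{n}_i$ in a number of iterations polynomial in the input size.

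Next I would run Algorithm~\ref{alg:3n/2-polytime} on the full instance with these thresholds, and argue by induction on the iteration number $k$ that it never stops before allocating $n$ bundles to $n$ distinct agents. Suppose for contradiction that the algorithm halts during some iteration $k^* \leq n$: this means that after the first $k^*-1$ successful allocations, the remaining goods $M^{\mathrm{het}}$ satisfy $v_i(M^{\mathrm{het}}) < t_i$ for every remaining agent $i$. Fix any such remaining $i$; by definition of $\BBFSA{n}_i$, the homogeneous run on $n$ clones of $i$ with threshold $t_i$ produces $n$ bundles, so at its $k^*$-th iteration the homogeneous remaining set $M^{(i)}$ satisfies $v_i(M^{(i)}) \geq (n - k^* + 1)\, t_i \geq t_i$.

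The hard part is turning this into a contradiction, and it hinges on a comparison invariant: at each iteration $k$, the number of rightmost goods used so far by the heterogeneous run is at most the corresponding count in the homogeneous run for at least one still-remaining benchmark agent. Since the left goods consumed are always $g_1,\ldots,g_{k-1}$ in both runs, such an inequality implies $M^{\mathrm{het}} \supseteq M^{(i)}$ and hence $v_i(M^{\mathrm{het}}) \geq v_i(M^{(i)}) \geq t_i$, the desired contradiction. I expect establishing this invariant to be the main technical obstacle, because a heterogeneous bag may close prematurely at a low-threshold agent or be extended by a high-threshold one. The inductive step must therefore be coupled with a carefully chosen assignment rule—for example, always awarding the closing bag to the triggering agent, or to the satisfied agent whose $t_j$ is largest—so that the right-good counter remains controlled relative to some unallocated benchmark at every step. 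Once the invariant is in place, the polynomial running time follows immediately from the polynomial-time computation of each $\BBFSA{n}_i$ together with the polynomial-time execution of Algorithm~\ref{alg:3n/2-polytime}.
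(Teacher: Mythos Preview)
Your overall plan matches the paper's: compute each $\BBFSA{n}_i$ by binary search, run Algorithm~\ref{alg:3n/2-polytime} with these thresholds, and compare the heterogeneous execution to each agent's homogeneous simulation via an invariant on how far the right-end pointer has advanced. The gap is in the invariant itself. You aim to maintain the comparison only for \emph{one} still-remaining benchmark agent, and you anticipate needing a tailored assignment rule to keep that benchmark alive. That route is shaky: whatever rule you pick, your benchmark may be the one who receives the bag, forcing a switch to a new benchmark for which the invariant has not been maintained.

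The paper sidesteps this by proving the stronger statement that the containment $\bigcup_{j<k} A_j \subseteq \bigcup_{j<k} B^i_j$ holds simultaneously for \emph{every} remaining agent $i$, with no constraint on who receives each bag. The inductive step is then immediate: at iteration $k$, both runs initialize the bag with $g_k$; by induction the heterogeneous right-pointer $r$ satisfies $r\geq s$ (where $s$ is $i$'s homogeneous right-pointer). Since agent $i$ is still present in the heterogeneous run, the bag is handed out the moment \emph{some} remaining agent---in particular $i$---values it at least $t_i$. But once the heterogeneous bag has consumed right-goods down to the index $s'$ at which $i$'s simulation stopped, it contains a superset of $\{g_k,g_{s-1},\ldots,g_{s'}\}$ and is therefore worth at least $t_i$ to $i$. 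Hence the heterogeneous bag never consumes more right-goods than $i$'s simulation, and the invariant propagates for $i$. Your feared scenario of the bag being ``extended by a high-threshold'' agent cannot occur, because $i$ herself would trigger allocation first; and ``closing prematurely at a low-threshold'' agent only helps. Once you see that the invariant holds for all remaining agents at once, the arbitrary tie-breaking in step~5 of Algorithm~\ref{alg:3n/2-polytime} is already sufficient and no special assignment rule is needed.
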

\begin{proof}
We first show that, when
Algorithm \ref{alg:3n/2-polytime}
is executed with threshold values 
$t_i = \BBFSA{n}_i$
for all $i\in[n]$,
it allocates $n$ bundles.
For each $j\geq 1$, denote:
\begin{itemize}
\item $A_j$ --- the bundle allocated at iteration $j$ of Algorithm \ref{alg:3n/2-polytime} with the true (different) valuations $v_1,\ldots,v_n$.
\item $B^i_j$ --- 
the bundle allocated at iteration $j$ of agent $i$'s successful simulation with threshold $t_i = \BBFSA{n}_i$.
\end{itemize}
We claim that, for every $k\geq 1$,
the set of goods allocated before step $k$ by the global algorithm is a subset of the goods allocated before step $k$ during agent $i$'s simulation, .
That is, 
$\bigcup_{j=1}^{k-1} A_j \subseteq \bigcup_{j=1}^{k-1} B^i_j$ for any remaining agent $i$.

The claim is proved by induction on $k$. 
The base is $k=1$. Before step $1$, both $\bigcup_{j=1}^{k-1} A_j$ and $\bigcup_{j=1}^{k-1} B^i_j$ are empty, so the claim holds vacuously.
Let $k\geq 1$. We assume the claim is true before iteration $k$, and prove that it is still true after iteration $k$.
The initial goods $g_1,\ldots,g_k$ are obviously allocated in both runs.
In agent $i$'s simulation, some additional goods $g_m,\ldots,g_s$ are allocated, for some $s\leq m$;
in the global run, goods $g_m,\ldots,g_r$ are allocated, for some $r\leq m$.
The induction assumption implies that $r \geq s$
(weakly fewer goods are allocated in the global run).
In iteration $k$, both runs initialize the bag with the same good $g_k$.
In $i$'s simulation, the bag is then filled with 
goods $g_{s-1},\ldots,g_{s'}$ for some $s'<s$,
such that $v_i(\{g_k,g_{s-1},\ldots,g_{s'}\})\geq t_i$.
In the global run, the bag is filled with 
goods $g_{r-1},\ldots,g_{r'}$ for some $r'<r$.
It is sufficient to prove that $r'\geq s'$.
Indeed, if no agent takes the bag until it contains the goods $(\{g_k,g_{r-1},\ldots,g_{s'}\})$,
then because $r\geq s$, the bag value is at least
$v_i(\{g_k,g_{s-1},\ldots,g_{s'}\})\geq t_i$.
Therefore, it is acceptable to agent $i$, so the algorithm allocates it (either to $i$ or to another agent).

This completes the proof of the claim.
The claim implies that, as long as $k < n$, 
the goods in $B^i_n$ are still available. This means that 
agent $i$ values the remaining goods at least $t_i$.
This is true for every remaining agent; therefore, the global algorithm continues to run until it allocates $n$ bundles.

The binary search and the simulation runs for each agent $i$ take time polynomial in the length of the binary representation of the valuations.
Once the thresholds are computed, Algorithm \ref{alg:3n/2-polytime} obviously runs in polynomial time.
This completes the proof of the lemma.
\end{proof}

Lemmas \ref{lem:bbfs} and \ref{lem:bbsfair} together imply:
\begin{theorem} \label{thm:3n-2poly}
There is an algorithm that computes a 
1-out-of-$\ceil{3n/2}$ MMS allocation
in time polynomial in the length of the binary representation of the problem.
\qed
\end{theorem}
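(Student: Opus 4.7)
The plan is to combine the two lemmas just established. By Lemma \ref{lem:bbsfair}, I can compute, for each agent $i$, the threshold $t_i := \BBFSA{n}_i$ via binary search (simulating Algorithm \ref{alg:3n/2-polytime} on $n$ clones of $i$), and then run Algorithm \ref{alg:3n/2-polytime} on the original instance with this threshold vector to obtain an allocation in which every agent $i$ receives a bundle valued at least $\BBFSA{n}_i$. By Lemma \ref{lem:bbfs}, $\BBFSA{n}_i \geq \mms{i}{1}{\ceil{3n/2}}{M}$ for each $i$, so the resulting allocation is $1$-out-of-$\ceil{3n/2}$ MMS-fair.

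For the runtime, I would point out two things. First, computing $\BBFSA{n}_i$ for a fixed agent $i$ reduces to a binary search over threshold values; each call to the simulation is a single pass of Algorithm \ref{alg:3n/2-polytime} after sorting the goods, which is polynomial in $m$ and $n$. The search range is bounded by $v_i(M)$ and the resolution is the smallest representable gap between distinct bundle values, so the total number of search steps is polynomial in the binary input length. Second, once the thresholds $t_i$ are in hand, the final run of Algorithm \ref{alg:3n/2-polytime} on the true (non-identical) instance clearly runs in polynomial time, since each iteration adds one good to the current bag and there are at most $m$ goods.

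There is essentially no remaining technical obstacle, since the substantive work has already been done: Lemma \ref{lem:csirik} (imported from Csirik et al.) gives the bin-covering guarantee for the identical-valuation case, Lemma \ref{lem:bbfs} converts that guarantee into a lower bound on the BBFS in terms of the $1$-out-of-$\ceil{3n/2}$ MMS, and Lemma \ref{lem:bbsfair} uses the monotonicity-style coupling argument between the global run and each agent's simulation to show that the algorithm actually terminates with $n$ bundles allocated when the thresholds $t_i = \BBFSA{n}_i$ are used simultaneously. The only thing left for the proof of Theorem \ref{thm:3n-2poly} is to note that stringing these two lemmas together yields both the fairness guarantee and the polynomial runtime, so the proof is a one-line deduction.
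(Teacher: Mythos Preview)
Your proposal is correct and matches the paper's approach exactly: the paper simply states that Lemmas~\ref{lem:bbfs} and~\ref{lem:bbsfair} together imply the theorem and ends with a \qed, so the proof is indeed the one-line deduction you describe. Your additional remarks on runtime are consistent with what is already argued inside Lemma~\ref{lem:bbsfair}.
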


\begin{example}[\textbf{Computing thresholds}]
Consider a setting with $m=6$ goods and $n=3$ agents with the following valuations:
\begin{center}
\begin{tabular}{c|cccccc|c}
    & $g_{1}$ & $g_{2}$ & $g_{3}$ & $g_{4}$ & $g_{5}$ & $g_{6}$ & $t_{i}$ \\\hline
    $v_{1}$ & \circled{10} & 8 & 6 & 3 & 2 & 1 & 9\\
    $v_{2}$ & 12 & \circled{7} & 6 & 5 & \circled{4} & \circled{2} & 11\\
    $v_{3}$ & 9 & 8 & \circled{7} & \circled{4} & 3 & 1 & 10
\end{tabular}
\end{center}

Each player computes a threshold via binary search on $[0, v_{i}(M)]$ for the maximum value $t_{i}$ such that the simulation of Algorithm \ref{alg:3n/2-polytime} yields three bundles. For agent $1$, the simulation with $t_{1} = 9$ yields bundles $\{g_{1}\}, \{g_{2}, g_{6}\}, \{g_{3}, g_{4}, g_{5}\}$. The corresponding simulation with $t_{1} = 10$ yields bundles $\{g_{1}\}, \{g_{2}, g_{5}, g_{6}\}$ with $\{g_{3}, g_{4}\}$ insufficient to fill a third bundle.

After all thresholds have been determined from simulations, Algorithm \ref{alg:3n/2-polytime} computes the circled allocation. Theorem \ref{thm:3n-2poly} guarantees that this allocation is at least $1$-out-of-$5$ MMS. Here the circled allocation satisfies $1$-out-of-$3$ MMS. $\blacksquare$
\end{example}

\begin{remark}
When $n$ is odd, there is a gap of $1$ between the existence result for
1-out-of-$\floor{3n/2}$ MMS,
and the polynomial-time computation result
for
1-out-of-$\ceil{3n/2}$ MMS.
\end{remark}

In experimental simulations on instances generated uniformly at random, Algorithm \ref{alg:3n/2-polytime} significantly outperforms the theoretical guarantee of 1-out-of-$\ceil{3n/2}$ MMS. In Appendix~\ref{sec:bidi-vs-uni}, we provide detailed experimentations and compare the bidirectional bag-filling algorithm with other bag-filling methods (e.g. the unidirectional bag-filling algorithm). 


\subsection{The case when $\ell>1$}
\label{sub:l>1}
So far, we could not adapt Algorithm \ref{alg:3n/2-polytime} to finding an $\ell$-out-of-$\floor{(\ell+1/2)n}$ MMS allocation for $\ell \geq 2$.
Below, we present a weaker approximation to MMS, based on the following lemma.
\begin{lemma}
\label{lem:L-out-of-D}
For all integers $d > \ell\geq 1$:
\begin{align*}
\ell\cdot \MMSA{1}{d}(M)
\leq
\MMSA{\ell}{d}(M)
\leq
\ell\cdot \MMSA{1}{(d-\ell+1)}(M).
\end{align*}
\end{lemma}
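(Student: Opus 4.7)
The plan is to prove the two inequalities separately, both by producing a suitable partition as a witness.

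For the left inequality $\ell\cdot \MMSA{1}{d}(M) \leq \MMSA{\ell}{d}(M)$, I would take any optimal $1$-out-of-$d$ partition $P^* = (P_1^*,\ldots,P_d^*)$, so that $\min_j v(P_j^*) = \MMSA{1}{d}(M)$. Since every bundle in this partition is worth at least $\MMSA{1}{d}(M)$, in particular the union of any $\ell$ of them is worth at least $\ell \cdot \MMSA{1}{d}(M)$. Therefore $P^*$, viewed as a $d$-partition, certifies a value of at least $\ell\cdot \MMSA{1}{d}(M)$ in the $\ell$-out-of-$d$ maximization, giving the desired bound. This is the easy half.

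For the right inequality $\MMSA{\ell}{d}(M) \leq \ell \cdot \MMSA{1}{(d-\ell+1)}(M)$, I would start from an optimal $\ell$-out-of-$d$ partition $Q = (Q_1,\ldots,Q_d)$ sorted so that $v(Q_1)\leq \cdots \leq v(Q_d)$, so that $S := \sum_{j=1}^\ell v(Q_j) = \MMSA{\ell}{d}(M)$. I would then merge the $\ell$ least-valuable bundles into a single bundle and keep the remaining $d-\ell$ bundles intact, producing a $(d-\ell+1)$-partition
\[
Q' = \bigl(Q_1\cup\cdots\cup Q_\ell,\ Q_{\ell+1},\ \ldots,\ Q_d\bigr).
\]
The key observation is that the minimum bundle value in $Q'$ is at least $S/\ell$: the merged bundle has value exactly $S \geq S/\ell$, while for every $j > \ell$ we have $v(Q_j)\geq v(Q_\ell) \geq S/\ell$ by the ordering and an averaging argument on $Q_1,\ldots,Q_\ell$. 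Hence $Q'$ certifies $\MMSA{1}{(d-\ell+1)}(M) \geq S/\ell = \MMSA{\ell}{d}(M)/\ell$, which rearranges to the claim.

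The main step that requires care is the averaging in the right inequality---specifically, arguing that $v(Q_\ell) \geq S/\ell$ so that none of the surviving bundles $Q_{\ell+1},\ldots,Q_d$ drops the minimum of $Q'$ below $S/\ell$. Everything else is essentially a direct application of the definitions of the $1$-out-of-$d$ and $\ell$-out-of-$d$ MMS in terms of max-min over partitions, together with additivity of $v$. No new machinery from earlier in the paper is required; the scaling and ordering normalizations of Section~\ref{sub:normalize-goods} and Section~\ref{sub:ordering} play no role here.
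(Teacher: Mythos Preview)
Your proposal is correct and follows essentially the same approach as the paper: for the left inequality you use the optimal $1$-out-of-$d$ partition as a witness for the $\ell$-out-of-$d$ value, and for the right inequality you merge the $\ell$ smallest bundles of the optimal $\ell$-out-of-$d$ partition into one to obtain a $(d-\ell+1)$-partition, using the averaging bound $v(Q_\ell)\geq S/\ell$. The paper phrases the right-hand bound via $S\leq \ell\cdot v(Q_\ell)$ and then observes that every bundle in the merged partition has value at least $v(Q_\ell)$, but this is the same argument up to reordering.
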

\begin{proof}
For the leftmost inequality,
Let $A_1,\ldots,A_d$ be the optimal $d$-partition in the definition of $\MMSA{1}{d}(M)$, and suppose w.l.o.g. that the bundles are ordered by ascending value. Then:
\begin{align*}
\ell\cdot \MMSA{1}{d}(M) &= \ell\cdot v_i(A_1)
\\
&\leq v_i(A_1)+\cdots + v_i(A_{\ell})
\\
&\leq \MMSA{\ell}{d}(M),
\end{align*}
where the last inequality follows from the existence of a $d$-partition in which the $\ell$ least-valuable bundles are $A_1,\ldots,A_{\ell}$.

For the rightmost inequality,
let $B_1,\ldots,B_d$ be the optimal $d$-partition in the definition of $\MMSA{\ell}{d}(M)$, and suppose w.l.o.g. that the bundles are ordered by ascending value. Then:
\begin{align*}
\MMSA{\ell}{d}(M) &= v_i(B_1)+\cdots + v_i(B_{\ell})
\\
&\leq \ell \cdot v_i(B_{\ell})
\\
&\leq \ell\cdot \MMSA{1}{(d-\ell+1)}(M),
\end{align*}
where the last inequality is proved by the partition with $(d-\ell+1)$ bundles: $B_1\cup\cdots\cup B_{\ell}, B_{\ell+1},\ldots, B_{d}$, in which the value of each bundle is at least $v_i(B_{\ell})$.
\end{proof}

For any positive integer $d$, we can approximate $\MMSA{1}{d}(M)$ by using an approximation algorithm for bin-covering, which we call \emph{Algorithm JS} \citep{jansen2003asymptotic}.
\begin{lemma}[\citet{jansen2003asymptotic}]
\label{lem:js1}
For any $\varepsilon >0$, 
Algorithm JS 
 runs in time 
 $\widetilde{O}\left(
 \frac{1}{\varepsilon^6}m^2
 +
 \frac{1}{\varepsilon^{8.76}}
 \right)$.%
 \footnote{
A more exact expression for the run-time is
 $O\left(
\frac{1}{\varepsilon^5}
\cdot \ln{\frac{m}{\varepsilon}}
\cdot \max{(m^2,\frac{1}{\varepsilon}\ln\ln\frac{1}{\varepsilon^3})}
+
\frac{1}{\varepsilon^4}\mathcal{T_M}(\frac{1}{\varepsilon^2})
\right)$, where $\mathcal{T_M}(x)$ is the run-time complexity of the best available algorithm for matrix inversion, which is currently $O(x^{2.38})$.
We simplified it a bit for clarity, and used $\widetilde{O}$ to hide the logarithmic factors.
}
If the sum of all valuations is at least $13t/\varepsilon^3$ (where $t$ is the bin size), then Algorithm JS fills at least $(1 - \varepsilon)\cdot \mathrm{OPT}(I) - 1$ bins.
\end{lemma}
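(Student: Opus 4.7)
The plan is to adapt the standard framework for an asymptotic polynomial-time approximation scheme (APTAS) for bin covering, in the spirit of \citet{jansen2003asymptotic}. The key ingredients are (i) a classification of items into large and small, (ii) a rounding step that reduces the number of distinct large-item sizes to a constant depending only on $\varepsilon$, and (iii) a configuration linear program whose rounded optimum differs from the true optimum by only a small additive error that the hypothesis $\sum v \geq 13t/\varepsilon^3$ converts into a multiplicative $(1-\varepsilon)$ guarantee.

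First I would call an item \emph{large} if its value exceeds $\varepsilon t$ and \emph{small} otherwise. Any bin in the optimal solution contains at most $1/\varepsilon$ large items. I would apply \emph{linear grouping} to the large items: sort them by value, split them into $K = O(1/\varepsilon^2)$ groups of nearly equal size, and round the values within each group down to the group's smallest value. Standard arguments show that this rounding loses at most one group of bins in the optimum, i.e., an additive $O(1/\varepsilon^2)$. Now there are only $K$ distinct large-item sizes, so the number of possible configurations of a single bin (i.e., multisets of large items of total value at most $t$) is $K^{O(1/\varepsilon)}$, which is constant for fixed $\varepsilon$.

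Next I would set up the configuration LP: one variable $x_C$ per configuration $C$, with constraints saying that (a) for each rounded size $s$, the usage $\sum_C C(s) x_C$ does not exceed the supply of items of size $s$, and (b) the residual capacity $\sum_C (t - v(C)) x_C$ is at most $V_{\mathrm{small}}$, the total value of small items. The objective is $\sum_C x_C$. Solving this LP (with constantly many variables and $O(K)$ constraints) can be done in time polynomial in $1/\varepsilon$; the fractional optimum is an upper bound on $\mathrm{OPT}$ after rounding. I would then round this LP solution to an integral one, losing another $O(1/\varepsilon^2)$ bins. Finally, I would greedily fill each bin's residual capacity with small items; since each small item has value at most $\varepsilon t$, the last small item added overshoots by at most $\varepsilon t$ per bin, and the leftover small items fill at most one further partial bin.

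The main obstacle will be the error accounting. The total additive loss from linear grouping, LP rounding, and the final partial bin is $O(1/\varepsilon^3)$. To turn this into a $(1-\varepsilon)$ multiplicative guarantee one must certify that $\mathrm{OPT}(I) = \Omega(1/\varepsilon^3)$, which is exactly where the hypothesis $\sum_j v_j \geq 13t/\varepsilon^3$ is used: since each filled bin uses value less than $2t$ (items are bounded by $t$ and the last-added item pushes the total just above $t$), $\mathrm{OPT}(I) \geq \sum_j v_j / (2t) \geq 13/(2\varepsilon^3)$, so the additive $O(1/\varepsilon^3)$ loss is at most $\varepsilon \cdot \mathrm{OPT}(I)$. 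The remaining step is a careful run-time analysis: the LP has $K^{O(1/\varepsilon)}$ variables but can be solved by column generation with a separation oracle based on a knapsack-style subproblem, giving the claimed $\widetilde{O}(m^2/\varepsilon^6 + 1/\varepsilon^{8.76})$ bound after balancing the complexity of the LP solve with the item-scanning steps.
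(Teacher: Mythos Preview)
The paper does not prove this lemma at all: it is quoted verbatim as a result of \citet{jansen2003asymptotic} and used as a black box in the subsequent Lemmas~\ref{lem:js2} and~\ref{lem:jss}. So there is no ``paper's own proof'' to compare against; your sketch is an attempt to reconstruct the argument of the cited reference, which the present paper deliberately omits.

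That said, your outline does track the high-level architecture of the Jansen--Solis-Oba APTAS (large/small split, linear grouping, configuration LP, greedy completion with small items). Two places where the sketch would need tightening if you actually wanted a proof: (i) the claim ``each bin in the optimal solution contains at most $1/\varepsilon$ large items'' is a bin-\emph{packing} bound; in bin \emph{covering} a filled bin can have value up to nearly $2t$ (assuming item values are below $t$), so the correct bound is $2/\varepsilon$; and (ii) your lower bound ``$\mathrm{OPT}(I)\geq \sum_j v_j/(2t)$ since each filled bin uses value less than $2t$'' is not a valid deduction---that inequality only upper-bounds the value consumed by the optimal bins, and some items may remain unused. The correct way to get $\mathrm{OPT}(I)=\Omega(\sum_j v_j/t)$ is to exhibit a feasible covering (e.g.\ Next-Fit) that achieves it. These are repairable details, but since the paper simply cites the result, the most faithful ``comparison'' is to note that no proof is expected here.
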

We can choose $\varepsilon$ based on the instance, and get the following simpler guarantee.
\begin{lemma}
\label{lem:js2}
Algorithm JS fills at least
\begin{align*}
\mathrm{OPT} - 2.35\cdot  \mathrm{OPT}^{2/3} - 1
\end{align*}
bins, and runs in time
$\widetilde{O}(m^{4})$.
\end{lemma}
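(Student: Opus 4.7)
The plan is to instantiate Lemma~\ref{lem:js1} with a carefully chosen $\varepsilon$ that depends on the instance, rather than a fixed constant. First, I would preprocess the input by placing each item of value $\geq t$ into its own bin; this increments both $\mathrm{OPT}$ and the algorithm's output by the same amount, so it suffices to prove the bound for the residual instance in which every item has value strictly below $t$. In the residual instance the total value satisfies $V < m\cdot t$. If additionally $V < 13\,t$ then $\mathrm{OPT} \leq V/t < 13$, which makes $\mathrm{OPT} - 2.35\,\mathrm{OPT}^{2/3} - 1 \leq 0$; the claim is then vacuous and we can output zero bins. Otherwise, I would set $\varepsilon := (13\,t/V)^{1/3} \in (0,1]$, which makes the hypothesis $V \geq 13\,t/\varepsilon^3$ of Lemma~\ref{lem:js1} hold with equality.

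With this $\varepsilon$, Lemma~\ref{lem:js1} guarantees at least $(1-\varepsilon)\mathrm{OPT} - 1 = \mathrm{OPT} - \varepsilon\,\mathrm{OPT} - 1$ filled bins. Since $V \geq \mathrm{OPT}\cdot t$, I would bound
\[
\varepsilon\,\mathrm{OPT} \;=\; \bigl(13\,t/V\bigr)^{1/3}\mathrm{OPT} \;\leq\; \bigl(13/\mathrm{OPT}\bigr)^{1/3}\mathrm{OPT} \;=\; 13^{1/3}\,\mathrm{OPT}^{2/3}.
\]
Since $13^{1/3} < 2.3514$, this yields at least $\mathrm{OPT} - 2.35\,\mathrm{OPT}^{2/3} - 1$ filled bins, with the minute numerical slack absorbed by the additive constant.

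For the running time, preprocessing forces $V/t \leq m$, so $1/\varepsilon = (V/(13\,t))^{1/3} = O(m^{1/3})$. Substituting into the bound $\widetilde{O}(m^2/\varepsilon^6 + 1/\varepsilon^{8.76})$ from Lemma~\ref{lem:js1}, the first term becomes $\widetilde{O}(m^2\cdot m^2) = \widetilde{O}(m^4)$ and the second $\widetilde{O}(m^{8.76/3}) = \widetilde{O}(m^{2.92})$, both within $\widetilde{O}(m^4)$. The central difficulty is that picking $\varepsilon$ small enough for a strong approximation conflicts both with the hypothesis of Lemma~\ref{lem:js1} and with keeping the run-time polynomial in $m$; the preprocessing step is precisely what decouples these concerns, because it forces $V/t \leq m$ in the instance actually passed to Algorithm~JS, which is what converts the $1/\varepsilon$-dependence into an $m$-dependence.
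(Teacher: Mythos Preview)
Your proposal is correct and follows essentially the same route as the paper: remove items of value $\geq t$ first, set $\varepsilon=(13t/V)^{1/3}$, use $\mathrm{OPT}\leq V/t$ to bound $\varepsilon\cdot\mathrm{OPT}$ by $13^{1/3}\mathrm{OPT}^{2/3}$, and use $V<mt$ to get $1/\varepsilon=O(m^{1/3})$ for the runtime. Your version is slightly more careful than the paper's, since you explicitly dispose of the edge case $V<13t$ (where $\varepsilon$ would exceed $1$), which the paper leaves implicit.
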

\begin{proof}
If any input value is at least $t$, then it can be put in a bin of its own, and this is obviously optimal. So we can assume w.l.o.g. that all input values are smaller than $t$.

Let $s$ be the sum of values, and set $\varepsilon := (13 t / s)^{1/3}$.
The number of bins in any legal packing is at most $s/t$, so 
\begin{align*}
\mathrm{OPT} \leq& s/t
\\
t/s \leq  &1/\mathrm{OPT}
\\
\varepsilon \leq  &(13/\mathrm{OPT})^{1/3}
\\
&\approx 2.35/ \mathrm{OPT}^{1/3}.
\end{align*}


The $\varepsilon$ is chosen such that $s = 13t/\varepsilon^3$. So by Lemma \ref{lem:js1}, the number of bins filled by Algorithm JS is at least 
\begin{align*}
&
\mathrm{OPT} - \varepsilon\cdot \mathrm{OPT} - 1
\\
\geq&
\mathrm{OPT} - 2.35\cdot  \mathrm{OPT}^{2/3} - 1.
\end{align*}

Since by assumption each value is smaller than $t$, we have $s<m t$, so 
$\varepsilon  > (13/m)^{1/3}$ 
and
$1/\varepsilon  \in O(m^{1/3})$.  Therefore, the run-time is in
\begin{align*}
&
\widetilde{O}\left(
 \frac{1}{\varepsilon^6}m^2
 +
 \frac{1}{\varepsilon^{8.8}}
\right)
\\
\approx&
\widetilde{O}\left(
 m^2\cdot m^2
 +
m^{2.92}
\right)
\\
\approx&
\widetilde{O}\left(m^4\right).
\qedhere
\end{align*}
%
%
%
%
%
%
\end{proof}

Analogously to the definition of the BBF-Share (BBFS) (Definition \ref{def:bbfs}), 
we define the 1-out-of-$d$ \emph{JS-Share} (JSS) of each agent $i$  as
the largest value $t_i$ for which
Algorithm JS fills at least $d$ bins when executed with 
valuation $v_i$ and bin-size $t_i$.
The JSS can be computed up to any desired accuracy using binary search.

Clearly, 
$
\mms{i}{1}{d}{M}
\geq
\JSSA{d}_i
$.
Analogously to Lemma \ref{lem:bbfs}, we have
\begin{lemma}
\label{lem:jss}
For any integer $d\geq 1$ and agent $i\in[n]$:
\begin{align*}
\JSSA{d}_i
\geq
\mms{i}{1}{\ceil{d+15\cdot d^{2/3}+1}}{M}.
\end{align*}
\end{lemma}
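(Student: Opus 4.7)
The plan is to follow the template of Lemma \ref{lem:bbfs}, replacing the $\frac{2}{3}$-approximation of Lemma \ref{lem:csirik} by the sharper guarantee of Lemma \ref{lem:js2}. Set $D := \ceil{d + 15 d^{2/3} + 1}$ and $t := \mms{i}{1}{D}{M}$. By the MMS definition there is a $D$-partition of $M$ in which every bundle has value at least $t$ to agent $i$, so the bin-covering optimum satisfies $\mathrm{OPT}(v_i, t) \geq D$. Lemma \ref{lem:js2} then guarantees that Algorithm JS, run with valuation $v_i$ and bin size $t$, fills at least $\mathrm{OPT}(v_i, t) - 2.35\,\mathrm{OPT}(v_i, t)^{2/3} - 1$ bins. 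If this number is at least $d$, the definition of $\JSSA{d}_i$ immediately gives $\JSSA{d}_i \geq t = \mms{i}{1}{D}{M}$, as required.

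The work therefore reduces to a numerical inequality. Let $g(x) := x - 2.35 x^{2/3} - 1$. A one-line derivative computation shows that $g$ is strictly increasing on $[4, \infty)$, so it suffices to verify $g(D) \geq d$. Substituting the lower bound $D \geq d + 15 d^{2/3} + 1$, the target inequality collapses to $15 d^{2/3} \geq 2.35\, D^{2/3}$, or equivalently $D \leq (15/2.35)^{3/2} d \approx 16.1\, d$. Combined with the easy upper bound $D \leq d + 15 d^{2/3} + 2$, this becomes an elementary algebraic inequality in $d$ (essentially $15 d^{2/3} + 2 \leq 15.1\, d$), which holds for every $d \geq 2$. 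The constant $15$ in the statement is chosen precisely so that this computation goes through cleanly.

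The anticipated obstacle is the boundary case $d = 1$, where $D = 17$ and a direct evaluation yields $g(D) \approx 0.47$, so the continuous inequality just barely fails. Here I would appeal to integrality: the number of filled bins is a nonnegative integer bounded below by the real value $g(D) > 0$, and hence is at least $1 = d$. With this edge case dispatched, the conclusion $\JSSA{d}_i \geq \mms{i}{1}{D}{M}$ follows directly from the definition of the JS-share. The main technical tension is thus between the real-valued approximation bound of Lemma \ref{lem:js2} and the integer-valued bin count; a slightly larger constant than $15$ would remove the need for the rounding argument, but the statement as written is tight enough that this mild subtlety is unavoidable.
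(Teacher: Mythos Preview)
Your proposal is correct and follows essentially the same route as the paper: set $t=\mms{i}{1}{D}{M}$ with $D=\ceil{d+15d^{2/3}+1}$, observe $\mathrm{OPT}(v_i,t)\geq D$, and use Lemma~\ref{lem:js2} together with the monotonicity of $g(x)=x-2.35x^{2/3}-1$ to conclude that Algorithm~JS fills at least $d$ bins. The paper's numerical step simply bounds $(d+15d^{2/3}+1)^{2/3}\leq(16d)^{2/3}$ and then notes $2.35\cdot16^{2/3}\approx14.92<15$; your reduction to $D\leq(15/2.35)^{3/2}d$ is the same inequality rearranged, and your separate treatment of $d=1$ via integer rounding is in fact more careful than the paper, whose bound $d+15d^{2/3}+1\leq 16d$ tacitly assumes $d\geq 2$.
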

\begin{proof}
Let $t_i := \mms{i}{1}{\ceil{d+15\cdot d^{2/3}+1}}{M}$.
By definition of MMS, there is a partition of $M$ into $\ceil{d+15\cdot d^{2/3}+1}$ bundles of size at least $t_i$.
By Lemma \ref{lem:js2}, Algorithm JS with bin-size $t_i$ fills at least
\begin{align*}
&
(d+15\cdot d^{2/3}+1) - 2.35\cdot (d+15\cdot d^{2/3}+1)^{2/3} - 1
\\
\geq
&
(d+15\cdot d^{2/3}+1) - 2.35\cdot (16 d)^{2/3} - 1
\\
\geq
&
(d+15\cdot d^{2/3}+1) - 14.92 d^{2/3} - 1
\\
\geq
&
d
\end{align*}
bins.
By definition of the JSS, since Algorithm JS allocates at least $d$ bins with size $t_i$, we have $t_i\leq \JSSA{d}_i$.
\end{proof}

Now, each agent can participate in the algorithm of Section \ref{sec:goods-lone} without computing the exact MMS value. Given an integer $\ell\geq 2$,
let $d := \floor{(\ell+\frac{1}{2})n}$.
Each agent $i$ can compute
the value of $\JSSA{d}_i$
using binary search.
The search also finds a partition of $M$ into $d$ bundles, each of which has a value of at least $\JSSA{d}_i$.
The agent can now use this partition for scaling:
the valuations are scaled such that the value of each bundle in the partition is exactly $1$.
The algorithm in Section \ref{alg:(L+1/2)n} guarantees to each agent a bundle with value at least $\ell$, which is at least
$\ell\cdot \JSSA{d}_i$. 
By Lemma \ref{lem:jss},
this value is at least
$\ell\cdot\mms{i}{1}{\ceil{d+15d^{2/3}+1}}{M}$.
By the right-hand side of Lemma \ref{lem:L-out-of-D}, it is at least 
$\mms{i}{\ell}{\ceil{d+15d^{2/3}+\ell}}{M}$.
Thus, we have proved the following theorem.

\begin{theorem}
\label{thm:goods-approx}
Let 
$\ell\geq 2$ an integer, and 
$d := \floor{(\ell+\frac{1}{2})n}$.
It is possible to compute an allocation in which the value of each agent $i$ is at least 
\begin{align*}
\mms{i}{\ell}{\ceil{d+15d^{2/3}+\ell}}{M},
\end{align*}
in time 
 $\widetilde{O}\left(
 n\cdot m^{4}
\right)$.
\end{theorem}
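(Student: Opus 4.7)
The plan is to avoid the NP-hard task of computing exact MMS values by substituting the JS-Share $\JSSA{d}_i$ as a computationally tractable proxy, and then route the proof through Algorithm~\ref{alg:(L+1/2)n} (which only requires a lower-bound partition, not the true MMS value). First, I would have each agent $i$ run a binary search over threshold values $t_i$, invoking Algorithm JS as an oracle at each step, to compute $\JSSA{d}_i$ for $d := \floor{(\ell+\tfrac{1}{2})n}$. A by-product of this search is an explicit partition of $M$ into $d$ bundles, each of value at least $\JSSA{d}_i$ to agent~$i$.

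Next I would feed these partitions into the machinery of Section~\ref{sec:goods-lone}. The key observation is that the correctness proof of Algorithm~\ref{alg:(L+1/2)n} never actually uses the fact that the input partition is MMS-optimal; it only uses that each agent starts from a $d$-partition whose minimum bundle has value at least some scaling target. Therefore I would rescale each agent's valuation so that every bundle of the JSS-partition has value exactly $1$, and then run Algorithm~\ref{alg:(L+1/2)n} verbatim. The guarantee is that each agent $i$ receives a bundle of rescaled value at least $\ell$, which by undoing the scaling amounts to at least $\ell \cdot \JSSA{d}_i$.

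It then remains to translate $\ell \cdot \JSSA{d}_i$ into an ordinal MMS statement. Applying Lemma~\ref{lem:jss} gives
\begin{align*}
\ell \cdot \JSSA{d}_i \;\geq\; \ell \cdot \mms{i}{1}{\ceil{d+15 d^{2/3}+1}}{M},
\end{align*}
and then the right-hand inequality of Lemma~\ref{lem:L-out-of-D} (with $d' := \ceil{d+15 d^{2/3}+1}$ and $\ell$ replaced accordingly, so that $d'-\ell+1$ on the $1$-out-of side matches the inner index) yields
\begin{align*}
\ell \cdot \mms{i}{1}{d'}{M} \;\geq\; \mms{i}{\ell}{d'+\ell-1}{M} \;=\; \mms{i}{\ell}{\ceil{d+15 d^{2/3}+\ell}}{M},
\end{align*}
which is exactly the promised bound.

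For the runtime, each agent's binary search performs $O(\mathrm{poly}(\log))$ calls to Algorithm JS, each costing $\widetilde{O}(m^4)$ by Lemma~\ref{lem:js2}; summing over $n$ agents gives $\widetilde{O}(n\, m^4)$, and Algorithm~\ref{alg:(L+1/2)n} itself runs in polynomial time once the partitions are supplied. The main obstacle I anticipate is a bookkeeping one rather than a conceptual one: verifying carefully that the scaling and normalization steps of Algorithm~\ref{alg:(L+1/2)n} (Section~\ref{sub:normalize-goods}) remain valid when the starting $d$-partition comes from Algorithm JS instead of from the true MMS, and that the arithmetic tying $d$, $\ceil{d+15 d^{2/3}+1}$, and $\ceil{d+15 d^{2/3}+\ell}$ lines up correctly through both lemmas---in particular that the $+\ell-1$ shift introduced by Lemma~\ref{lem:L-out-of-D} does not erode the approximation beyond the claimed $O(n^{2/3})$ additive term.
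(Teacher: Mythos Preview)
Your proposal is correct and follows essentially the same route as the paper: compute $\JSSA{d}_i$ by binary search, use the resulting $d$-partition to scale valuations so each part has value~$1$, run Algorithm~\ref{alg:(L+1/2)n} to guarantee value~$\ell$ (hence $\ell\cdot\JSSA{d}_i$), and then chain Lemma~\ref{lem:jss} with the right-hand inequality of Lemma~\ref{lem:L-out-of-D}. Your arithmetic on the index shift ($d'+\ell-1=\ceil{d+15d^{2/3}+\ell}$) and the runtime accounting are exactly what the paper does, only spelled out more explicitly.
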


\section{Future Directions}
The existence of tighter ordinal approximations that improve $\ell$-out-of-$\floor{(\ell+1/2)n}$ MMS allocations is a compelling open problem. 
Specifically, one can generalize the open problem raised by  \citet{budish2011combinatorial} and ask, for any $\ell\geq 1$ and $n\geq 2$: does there exist an $\ell$-out-of-$(\ell n + 1)$ MMS allocation? 

For the polynomial-time algorithm when $\ell = 1$, 
we extend the bin covering algorithm of \citet{csirik1999two}.
We believe that the interaction between this problem and fair allocation of goods may be of independent interest, as it may open new ways for developing improved algorithms. For example,  \citet{csirik1999two} also present a $3/4$ approximation algorithm for bin covering, which may potentially be adapted to yield a  $1$-out-of-$\ceil{4n/3}$ MMS allocation.
Similarly, \citet{csirik2001better} and \citet{jansen2003asymptotic} present polynomial-time approximation schemes for bin covering, which may yield even better MMS approximations in future work.

Finally, it is interesting to study ordinal maximin approximation for items with non-positive valuations (i.e. chores), as well as for mixtures of goods and chores. Techniques for allocation of goods do not immediately translate to achieving approximations of MMS when allocating chores, so new techniques are needed
\citep{hosseini2022ordinal}.

\section*{Acknowledgments}
Hadi Hosseini acknowledges support from NSF IIS grants \#2052488 and \#2107173.
Erel Segal-Halevi is supported by the Israel Science Foundation (grant no. 712/20).

We are grateful to Thomas Rothvoss, Ariel Procaccia,
Joshua Lin,
Inuyasha Yagami,
Chandra Chekuri, Neal Young,
and the anonymous referees of EC 2021 and JAIR for their valuable feedback.

\newpage
\appendix
\section*{APPENDIX}

\section{Comparing Ordinal and Multiplicative Approximations}
\label{sec:simulation}
Our ordinal guarantees may be better than the best known multiplicative MMS approximation (i.e. $3/4$) when the number of goods is large compared to the number of agents.

To illustrate, consider the extreme case in which there are infinitely many goods of equal value
(alternatively, suppose there are infinitely many goods with values that are independent and identically-distributed random variables).
Then 1-out-of-$n$ MMS converges to $1/n$
(with probability $1$, by the law of large numbers).%
\footnote{
An accurate computation of the convergence rate of the MMS to $1/n$ is beyond the scope of the present paper.
We refer the interested reader to 
\citet{mertens2001physicist}, 
who studies a closely-related problem:
the probability distribution of the smallest \emph{difference} between the highest-valued bundle and the lowest-valued bundle in an $n$-partition.
}
The $\ell$-out-of-$(\ell+\frac{1}{2})n$ MMS converges to
$2 \ell/(2 \ell+1)$ of this value, which is larger than $3/4+1/(12n)$ for $\ell\geq 2$, and approaches $1$ when $\ell\to \infty$.

In this section, we present a simple simulation experiment that compares the value of the $\ell$-out-of-$d$ MMS guaranteed by Theorem \ref{thm:l-out-of-d-existence}
(where $d=\floor{(\ell+1/2)n}$) with the best known multiplicative approximation of $1$-out-of-$n$ MMS, which is $\frac{3}{4} + \frac{1}{12n}$ \citep{garg2020improved}.
Our results show that the ordinal approximation for $\ell\geq 2$ is better than the multiplicative approximation already for $m \approx 20 n$, when the values are sampled from some natural distributions. 
We note that the simulations only compare the worst-case guarantees and not the actual algorithm performance.

Since computing the exact MMS is NP-hard,%
\footnote{
Using integer linear programming, we could compute the exact value of 1-out-of-$4$ MMS for $m > 200$ goods in reasonable time. However, we were not able to scale our computations for larger values of $n$.
}
we used a lower bound for our ordinal approximation and an upper bound for the ``competition'', so that our ratio is a lower bound for the real ratio.
For our ordinal approximation, we computed a lower bound using the \emph{greedy number partitioning} algorithm 
\cite{graham1966bounds,graham1969bounds}.
This algorithm is known to attain a reasonable approximation of the maximin share both in the worst case \citep{deuermeyer1982scheduling,csirik1992exact} and in the average case \citep{frenk1986rate}.
Given an integer $d$,
the algorithm initializes $d$ empty bundles. 
It iterates over the goods in descending order of their value, and puts the next good in the bundle with the smallest total value so far (breaking ties arbitrarily).
Once all goods are allocated, the sum of values in the $\ell$ bundles with the smallest values is a lower bound for the $\ell$-out-of-$d$ MMS.
Taking instead the smallest value times $\ell$ 
(which approximates $\ell\cdot\MMSA{1}{d}$)
yields nearly identical results.
For the multiplicative approximation, we just use the proportional share $v_i(M)/n$ 
as an upper bound for agent $i$'s  $1$-out-of-$n$ MMS.

For various values of $m$, we chose $m$ random integers to use as the good values.
We performed three simulations, in which the values were
distributed (a) uniformly at random in $[1,1000]$, (b) uniformly at random in  $[1000, 2000]$, and (c) geometrically with mean value of $1000$.
We modified $n$ between $4$ and $20$, and $m$ between $4 n$ and $80 n$. The results for all $n$ were very similar. 
While our approximation for $\ell=1$ is generally worse than $3/4$ of the MMS, our approximation for $\ell\geq 2$ is better already for $m\approx 20 n$, and it becomes better as $m$ grows.
Figure \ref{fig:simulation} illustrates these observations.\footnote{
Source code for the experiments is available at \url{https://github.com/erelsgl/ordinal-maximin-share}.}

\begin{figure}
\begin{center}
\begin{subfigure}[b]{\textwidth}
\caption{Values distributed uniformly in $[1,\ldots,1000]$:}
\includegraphics[width=0.45\textwidth]{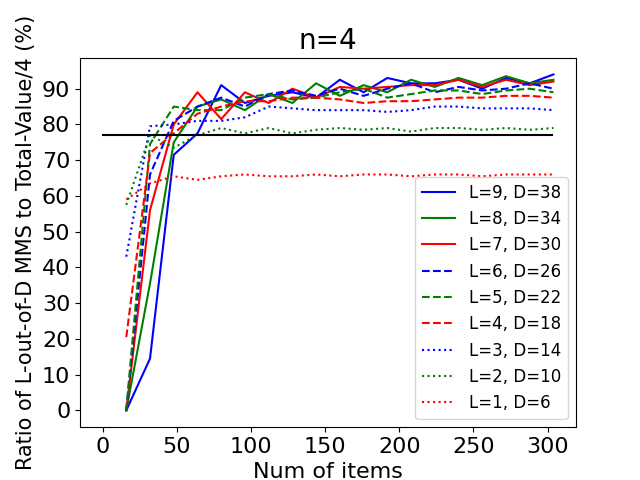}
~~
\includegraphics[width=0.45\textwidth]{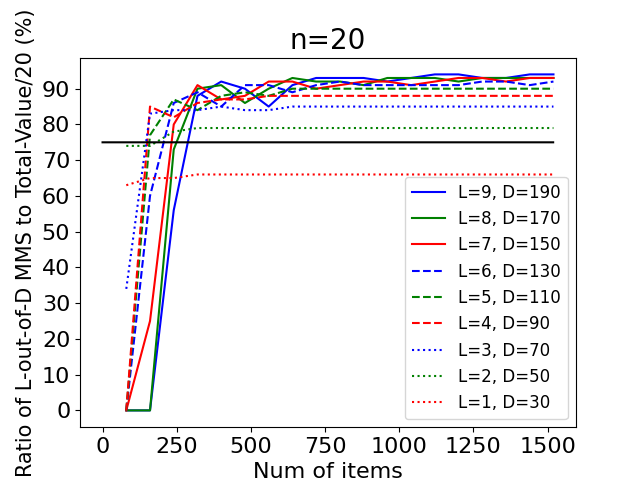}
\end{subfigure}
~\\
\begin{subfigure}[b]{\textwidth}
\caption{Values distributed uniformly in $[1000,\ldots,2000]$:}
\includegraphics[width=0.45\textwidth]{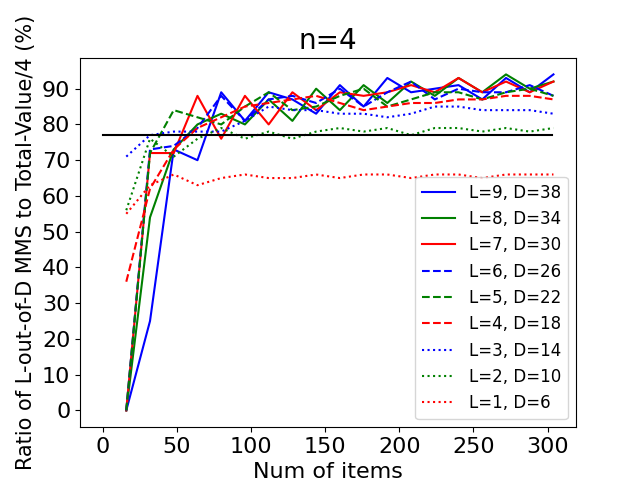}
~~
\includegraphics[width=0.45\textwidth]{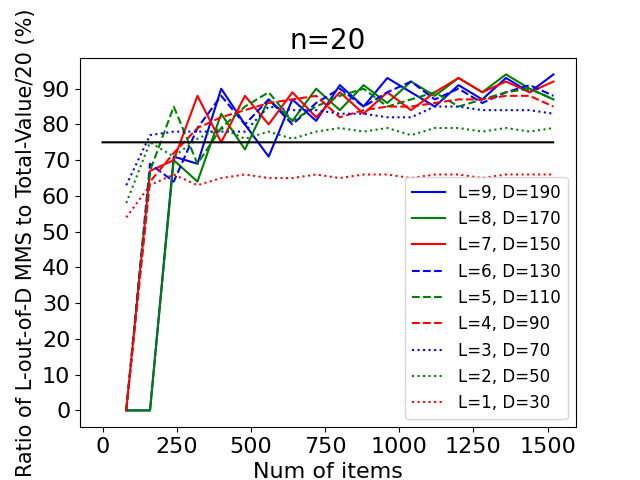}
\end{subfigure}
~\\
\begin{subfigure}[b]{\textwidth}
\caption{Values distributed geometrically with mean value $1000$:}
\includegraphics[width=0.45\textwidth]{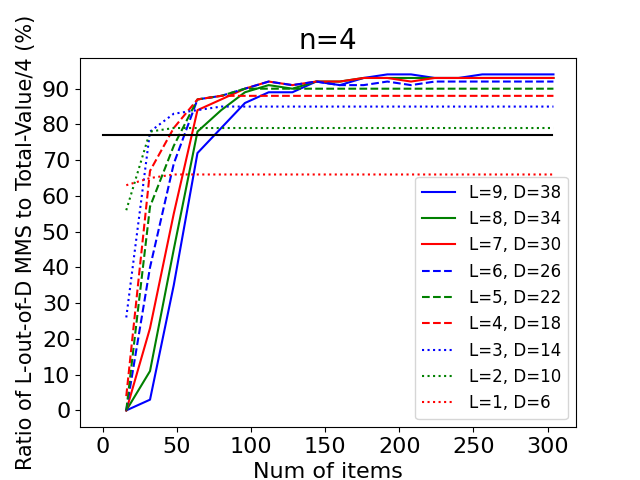}
~~
\includegraphics[width=0.45\textwidth]{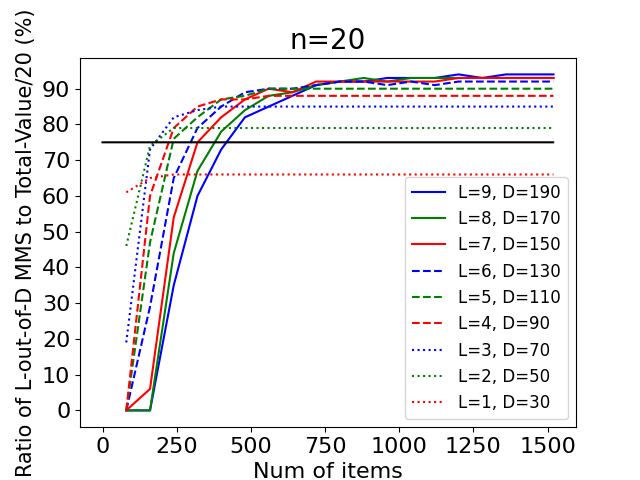}
\end{subfigure}
\end{center}
\caption{
\label{fig:simulation}
The value of the $\ell$-out-of-$d$ MMS (where $d=(\ell+1/2)n$) guaranteed by our Theorem \ref{thm:l-out-of-d-existence},
as a fraction of the $1$-out-of-$n$ MMS, for different values of $\ell$ and $n$ (at the left $n=4$ and at the right $n=20$).
The horizontal black line represents $3/4+1/(12n)$ of the $1$-out-of-$n$ MMS
}
\end{figure}
\section{Failure of Some Common Techniques for Approximate-MMS Allocation}
\label{sec:negative}
Our $1$-of-$\floor{3n/2}$ MMS guarantee
seems very similar to multiplicative $2/3$-MMS approximation,
as both can be seen as approximations of $2/3\cdot (v_i(M)/n)$.
One could expect that the same techniques should work in both cases. To illustrate that this is not the case, we consider one such technique, recently used by \citet{garg2018approximating} to find a $2/3$-MMS allocation in polynomial time. For completeness, we briefly describe their algorithm below (the detailed steps can be found in \citet{garg2018approximating}):
\begin{enumerate}
\item Scale the valuations such that all $n$ agents value the set of all goods at $3 n / 2$ (this implies that their MMS value is at most $3/2$).
\item Order the instance such that $v_i(g_1)\geq \cdots \geq v_i(g_m)$ to all agents $i$.
\item If an agent $i$ values $g_1$ by at least $1$, then allocate $g_1$ to $i$ and recurse with the remaining goods and $n-1$ agents.
\item If an agent $i$ values $\{g_n, g_{n+1}\}$ by at least $1$, 
then allocate $\{g_n, g_{n+1}\}$ to $i$ and recurse with the remaining goods and $n-1$ agents.
\item At this point, all agents value $g_1,\ldots,g_n$ at less than $1$, and $g_{n+1},\ldots,g_m$ at most $1/2$. Allocate the goods using bag-filling, initializing each bag $j\in[n]$ with the good $g_j$.
\end{enumerate}
They prove that the reductions in Steps (3) and (4) above do not change the MMS value of the remaining agents. 
While the former reduction is valid regardless of the number of MMS bundles, 
the latter reduction crucially depends on the fact that there are $n$ MMS bundles, which implies (by the pigeonhole principle) that at least one MMS bundle contains at least two goods from $\{g_1,\ldots,g_{n+1}\}$.
This no longer holds when there are $3n/2$ MMS bundles.
Therefore, allocating goods $\{g_n,g_{n+1}\}$ is no longer a valid reduction.
Moreover, if we scale the valuations as in Step (1), we may be unable to give each agent a value of at least $1$. 

\begin{example}
Suppose $n=20$, there are $30$ goods with value $1-\varepsilon$ and one good with value $30\varepsilon$.
The instance is already  normalized, since $v_i(M)=30=3n/2$ for all $i\in N$.
The value of all goods is less than $1$.
However, the value of $\{g_n,g_{n+1}\}$ is more than $1$, and if we allocate such pairs of goods to agents, at most $15$ agents will receive a bundle.
This example shows that the threshold of $2 v_i(M) / (3n)$ might be too high (too wasteful) for this problem. $\blacksquare$
\end{example}

Note that the challenging case in Section \ref{sub:high-value-goods} (Case \#3, for $\ell=1$) is exactly the case in which more than $n$ MMS bundles contain high-value goods, and each such good is contained in a unique MMS bundle.

\section{Bidirectional vs Unidirectional Bag-filling}
\label{sec:bidi-vs-uni}
Theorem~\ref{thm:3n-2poly} guarantees that the bidirectional bag-filling algorithm with a careful set up of thresholds (Algorithm~\ref{alg:3n/2-polytime}) will compute an allocation which satisfies at least 1-of-$(3n/2+1)$ MMS. We empirically evaluated this algorithm by selecting different ways to set agent's thresholds.
As a baseline, we use a simpler, unidirectional bag-filling algorithm, 
where the goods are put into bags in decreasing order of their value.
For each version we ran two sets of experiments: (i) one experiment where agents threshold values are computed individually, and (ii) one experiment where all agents' thresholds are a common fraction 
of their proportional share.

In each experiment, we generated 1,000 instances for each pair of $n \in [3, 20]$ and $m \in [n, 100]$\footnote{Observe that when $m < 3n/2+1$, the 1-of-$(3n/2+1)$ MMS is 0 for all agents. Thus any allocation satisfies this property.} where the valuations were uniformly distributed from $[0, 1000]$ and then ordered.

\subsection{Individual Thresholds}
For each agent, we utilize binary search to find the largest individual threshold where that agent can form at least $n$ bundles in successful simulations,
as explained in Section \ref{sec:goods-poly}.
To compare the bidirectional and unidirectional bag-filling approaches, we first compute the value that each agent received, as a fraction of his proportional share $Prop_i := v_i(M)/n$. We then plot the minimum ratio any agent received over all 1,000 instances, the average ratio all agents received over 1,000 instances, and the minimum of the average ratios of all agents per instance.

If the valuations were perfectly divisible (say as $m \rightarrow \infty$), we would expect that the 1-of-$(3n/2+1)$ MMS would equate to approximately $\frac{v_{i}(M)}{3n/2} \approx 2/3 \frac{v_{i}(M)}{n} = 2/3 Prop_i$. Since proportionality implies MMS, it is an upper bound for MMS values, each agent's 1-of-$(3n/2+1)$ MMS is at most $2/3 Prop_i$. 

Figure~\ref{fig:individual_thresholds} and Figure~\ref{fig:individual_histograms} show that, while both the bidirectional and unidirectional algorithms exceed the $2/3$ ratio on most instances, the bidirectional algorithm averages slightly higher (about 5\%), with a higher minimum average. Note that the unidirectional bag-filling algorithm is not guaranteed to give every agent 1-of-$(3n/2+1)$ MMS.

\begin{figure}[t]
\begin{center}
\emph{Thresholds computed individually:}
\\
\includegraphics[width=0.45\textwidth]{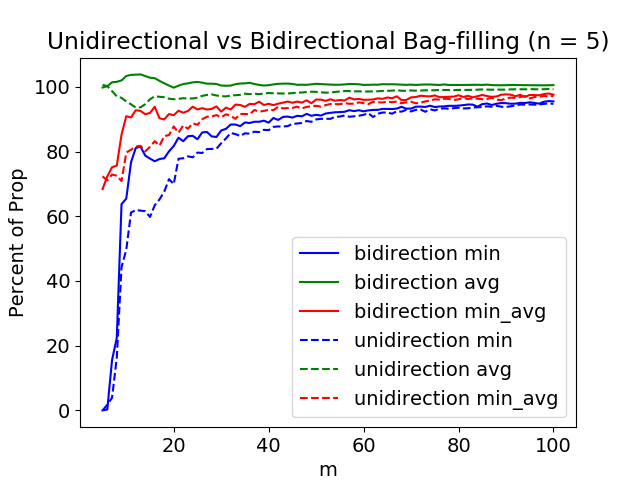}
~~
\includegraphics[width=0.45\textwidth]{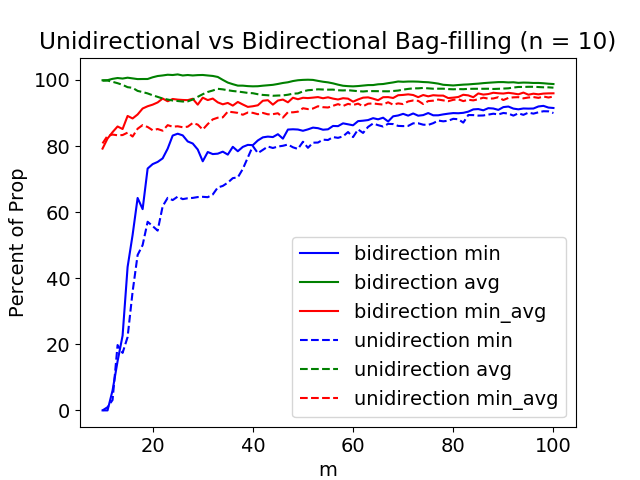}
\\
\includegraphics[width=0.45\textwidth]{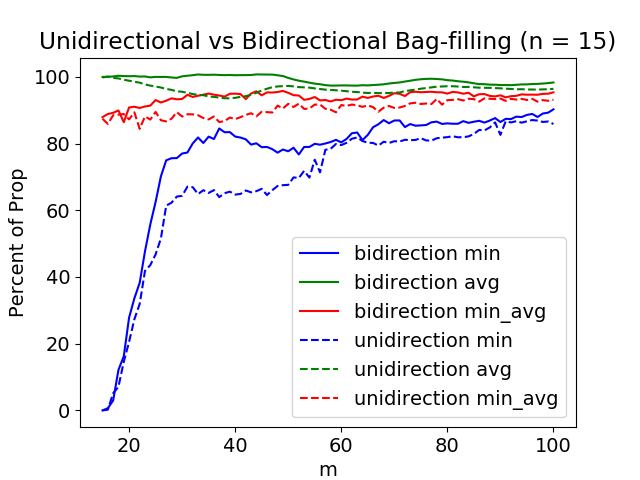}
~~
\includegraphics[width=0.45\textwidth]{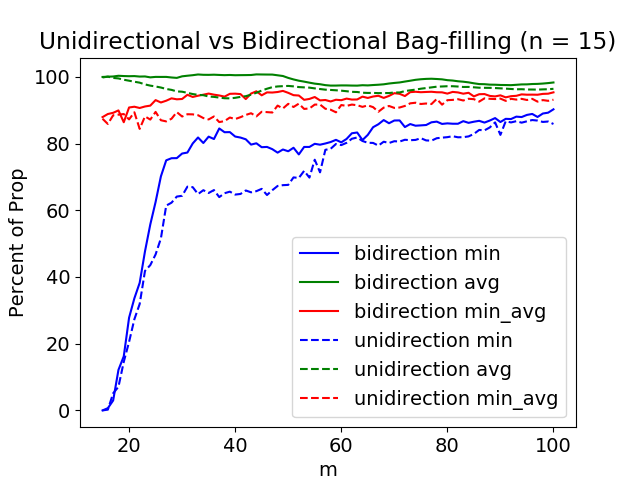}
\end{center}
\caption{
\label{fig:individual_thresholds}
The minimum, average, and minimum average (minimum over all instances of the average value of agents within the instance) value received by agents in both bidirectional and unidirectional bag-filling for various $n$ and $m$. Bidirectional bag-filling always exceeded its unidirectional counterpart.
}
\end{figure}

\begin{figure}
    \centering
    \begin{subfigure}[b]{0.49\textwidth}
        \centering
        \includegraphics[width=\linewidth]{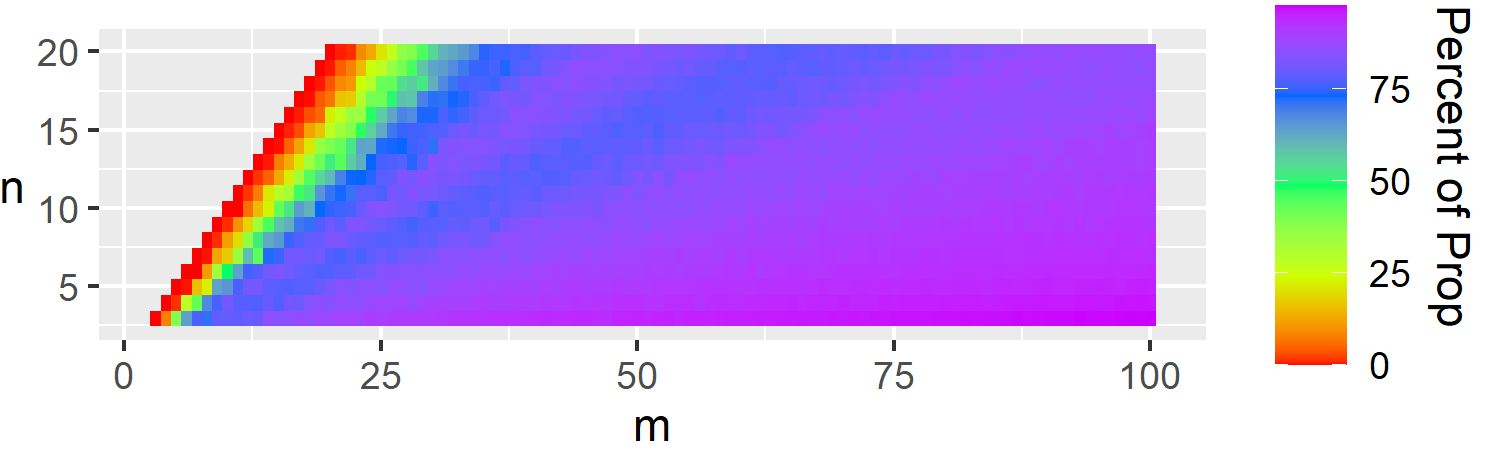}
        \caption{Bidirectional: Minimum}
        \label{fig:y equals x 1}
    \end{subfigure}
    \hfill
    \begin{subfigure}[b]{0.49\textwidth}
        \centering
        \includegraphics[width=\linewidth]{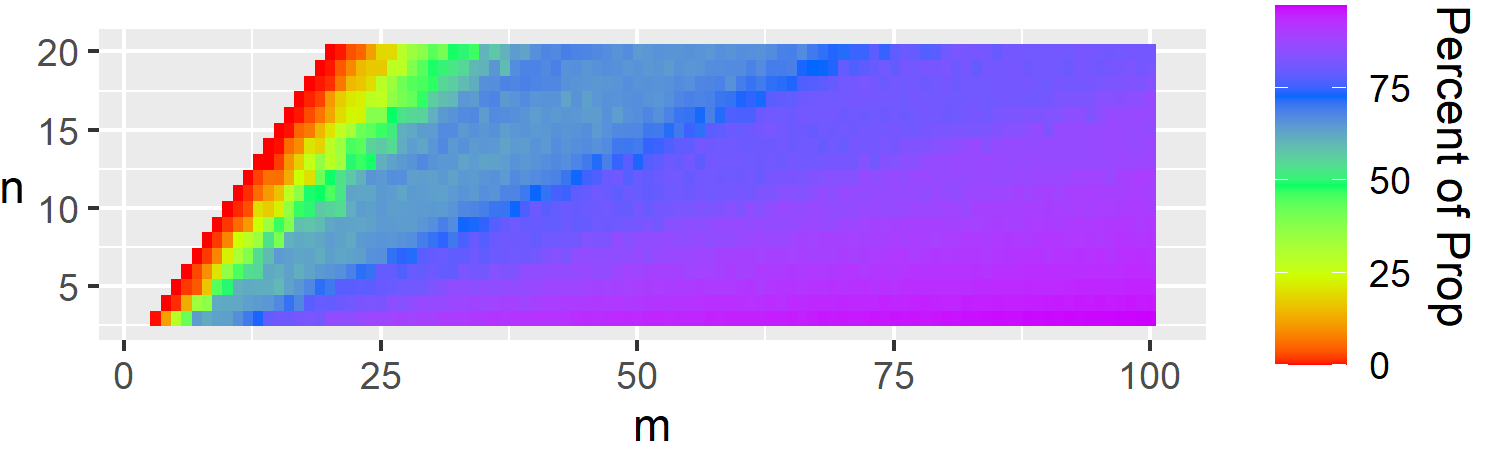}
        \caption{Unidirectional: Minimum}
        \label{fig:three sin x 1}
    \end{subfigure}
    \linebreak
    \begin{subfigure}[b]{0.49\textwidth}
        \centering
        \includegraphics[width=\linewidth]{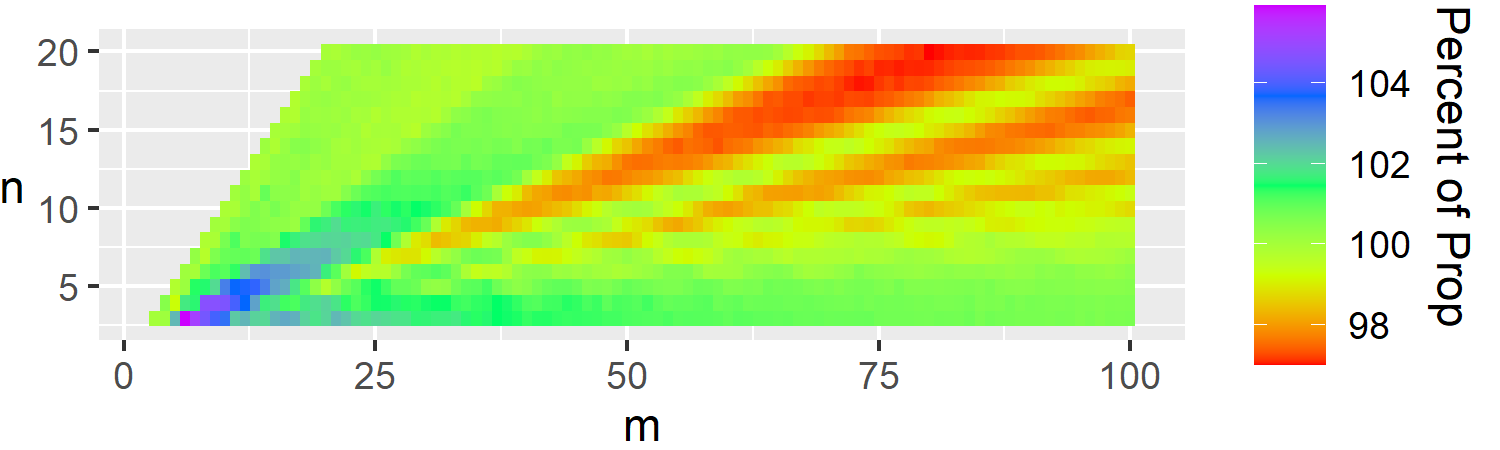}
        \caption{Bidirectional: Average}
        \label{fig:y equals x 2}
    \end{subfigure}
    \hfill
    \begin{subfigure}[b]{0.49\textwidth}
        \centering
        \includegraphics[width=\linewidth]{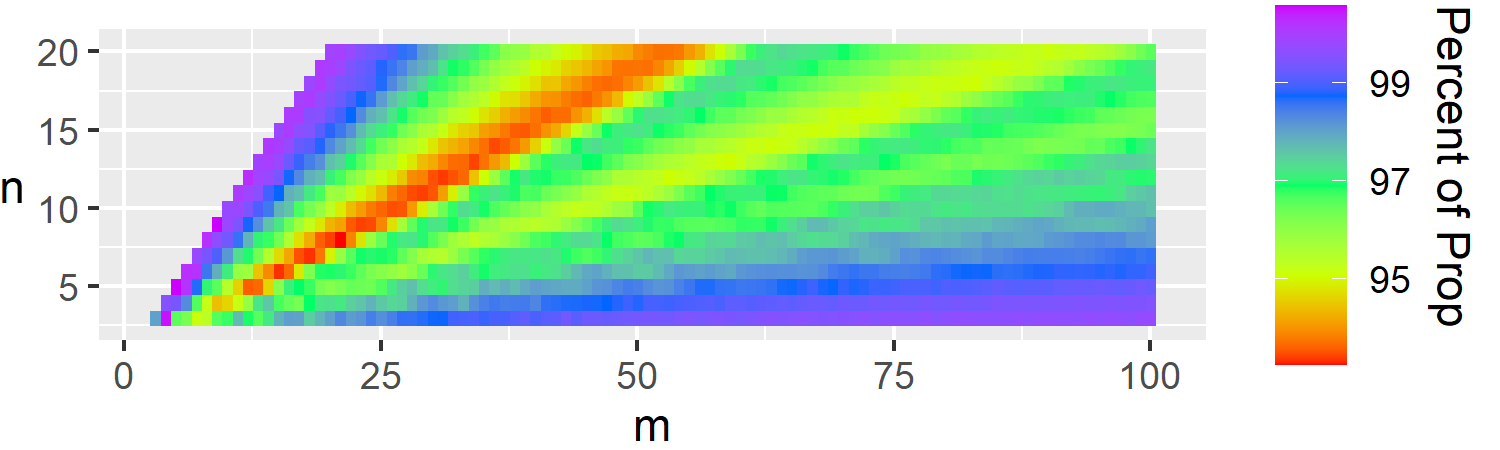} 
        \caption{Unidirectional: Average}
        \label{fig:three sin x 2}
    \end{subfigure}
    \linebreak 
    \begin{subfigure}[b]{0.49\textwidth}
        \centering
        \includegraphics[width=\linewidth]{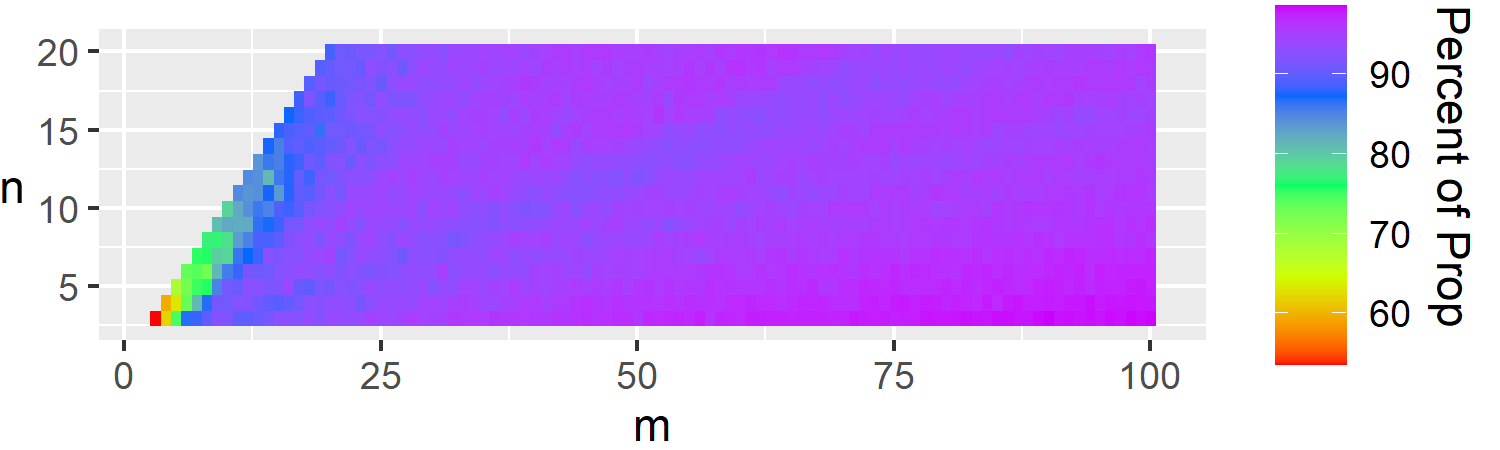}
        \caption{Bidirectional: Minimum Average}
        \label{fig:y equals x 3}
    \end{subfigure}
    \hfill
    \begin{subfigure}[b]{0.49\textwidth}
        \centering
        \includegraphics[width=\linewidth]{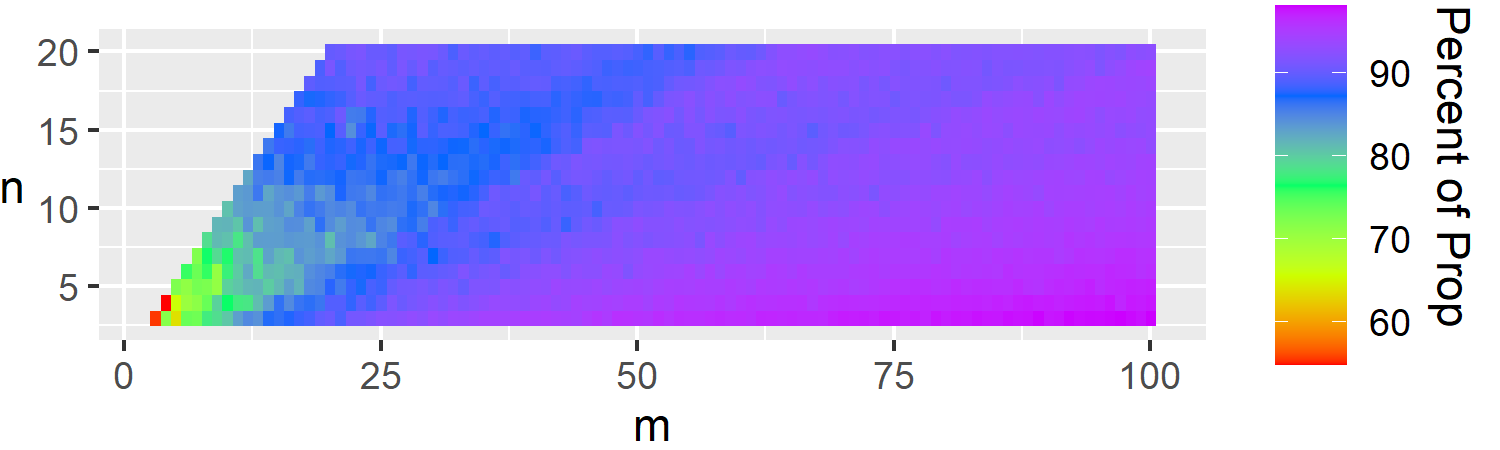} \caption{Unidirectional: Minimum Average}
        \label{fig:three sin x 3}
    \end{subfigure}
    \caption{
     In each figure, the x axis is the number of goods ($m$) and the y axis is the number of agents ($n$). The color represents the fraction of proportional allocations (in percent).
     }
    \label{fig:individual_histograms}
\end{figure}

\clearpage

\subsection{Common Thresholds}
We run binary search to find the largest percentage $t$ where the bidirectional bag-filling algorithm gives each agent that agent at least $t\frac{v_i(M)}{n}$. The ratio was computed up to an error tolerance of $0.1\%$. For this experiment we tracked the minimum ratio and the average ratio guaranteed to all agents across the 1,000 instances.

\begin{figure}
\begin{center}
\emph{Thresholds computed using a common ratio:}
\\
\includegraphics[width=0.45\textwidth]{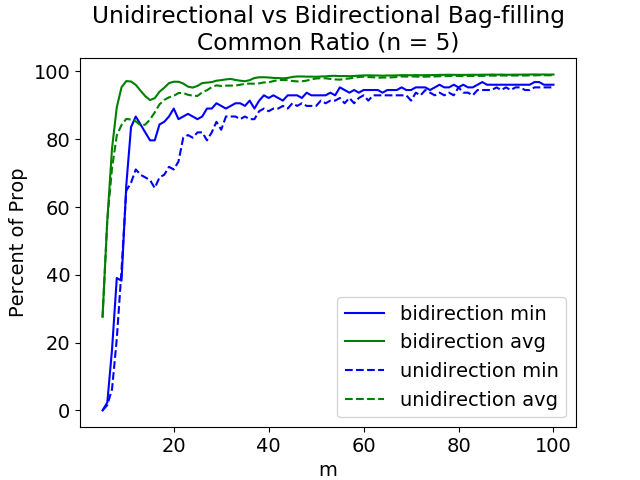}
~~
\includegraphics[width=0.45\textwidth]{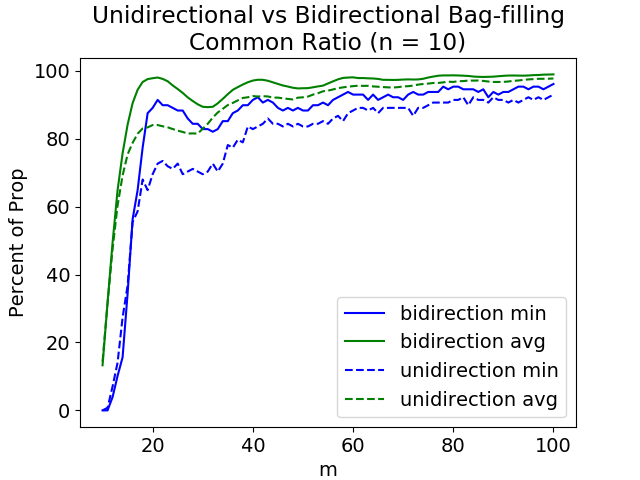}
\\
\includegraphics[width=0.45\textwidth]{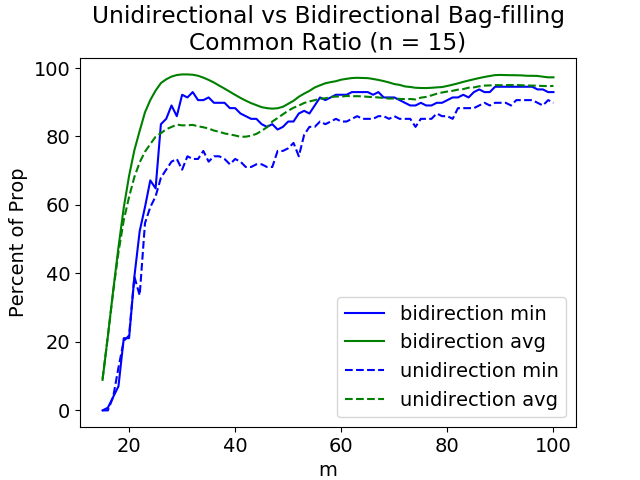}
~~
\includegraphics[width=0.45\textwidth]{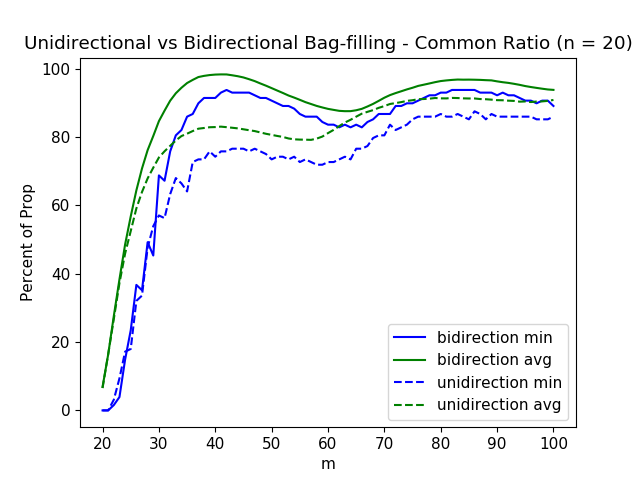}
\end{center}
\caption{
\label{fig:common_thresholds}
The minimum, average, and minimum average (minimum over all instances of the average value of agents within the instance) value received by agents in both bidirectional and unidirectional bag-filling for various $n$ and $m$. Bidirectional bag-filling always exceeded its unidirectional counterpart.
}
\end{figure}

We observe that the bidirectional bag-filling algorithm outperforms the unidirectional bag-filling algorithm, especially when the number of goods is small relative to the number of agents ($m < 2n$).
Interestingly, Figure \ref{fig:common_thresholds} illustrates that bidirectional bag-filling achieves a higher fraction of proportionality with respect to all measures (minimum, average, and minimum average over all instances) compared to the unidirectional bag-filling algorithm.


\begin{figure}
    \centering
    \begin{subfigure}[b]{0.49\textwidth}
        \centering
        \includegraphics[width=\linewidth]{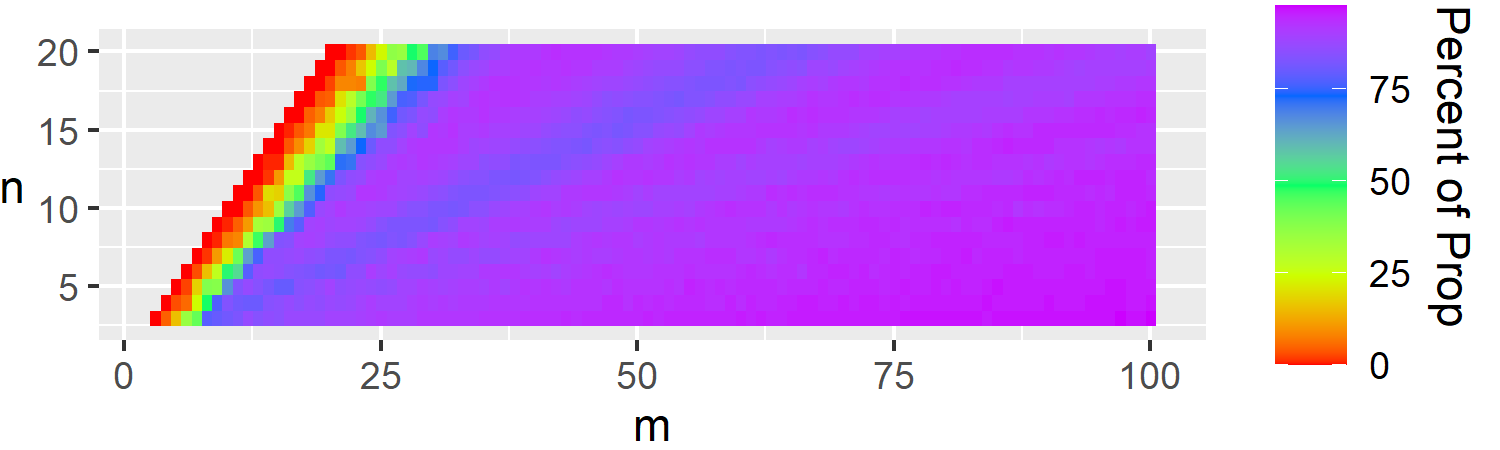}
        \caption{Bidirectional: Minimum}
        \label{fig:y equals x 4}
    \end{subfigure}
    \hfill
    \begin{subfigure}[b]{0.49\textwidth}
        \centering
        \includegraphics[width=\linewidth]{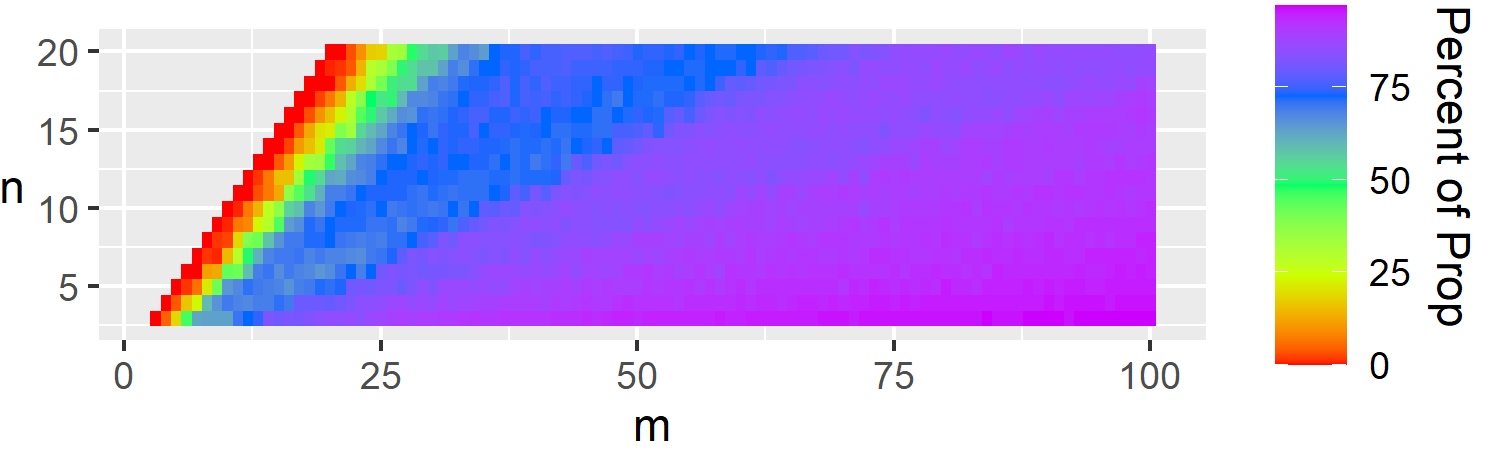}
        \caption{Unidirectional: Minimum}
        \label{fig:three sin x 4}
    \end{subfigure}
    \linebreak 
    \begin{subfigure}[b]{0.49\textwidth}
        \centering
        \includegraphics[width=\linewidth]{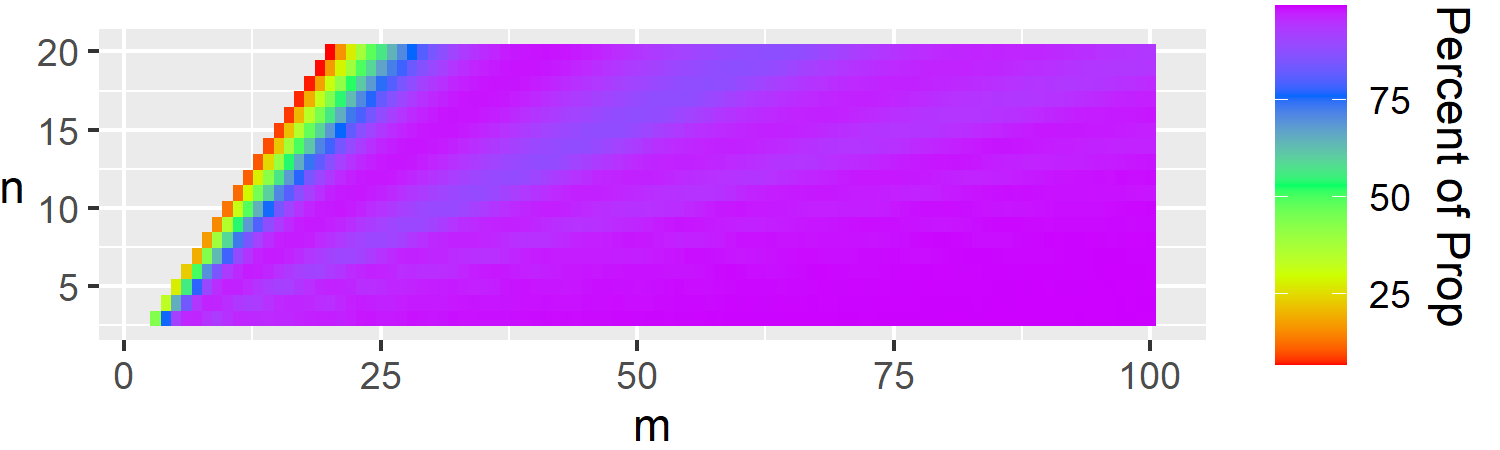}
        \caption{Bidirectional: Average}
        \label{fig:y equals x 5}
    \end{subfigure}
    \hfill
    \begin{subfigure}[b]{0.49\textwidth}
        \centering
        \includegraphics[width=\linewidth]{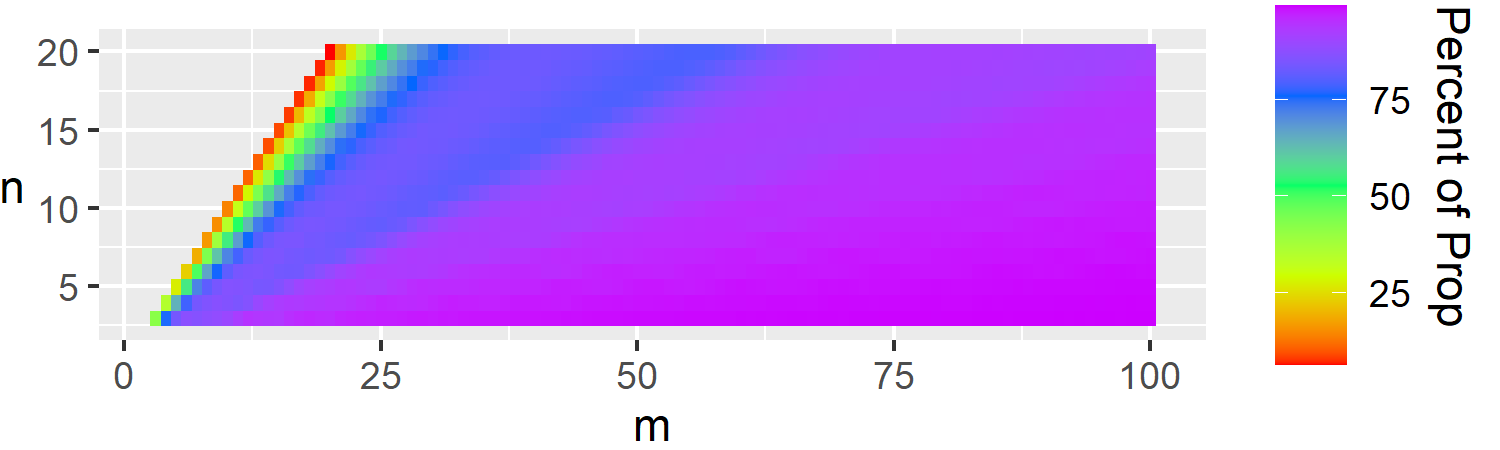} 
        \caption{Unidirectional: Average}
        \label{fig:three sin x 5}
    \end{subfigure}
    \caption{
     In each figure, the x axis is the number of goods ($m$) and the y axis is the number of agents ($n$). The color represents the fraction of proportional allocations (in percent).
     }     
\end{figure}


\newpage
\section{Beyond Additive Valuations} \label{app:responsive}
In this section, we show that no meaningful ordinal MMS approximation is possible when generalizing additive valuations to \emph{responsive} valuations (defined below), even for two agents. Moreover, this result implies that ordinal MMS approximations cannot be extended to \emph{submodular} valuations.

The class of \emph{responsive valuations} was introduced in the literature on matching markets \citep{roth1986allocation,alkan1999properties,klaus2005stable,hatfield2009strategy}
and has recently been used in fair item allocation 
\citep{aziz2019efficient,kyropoulou2020almost,babaioff2021competitive}.
Formally, a valuation $v: 2^M\to \mathbb{R}_+$ is called \emph{responsive} if for any two goods $x, y\in M$ and any subset $Z\subseteq M$:
\begin{align*}
v(x) \leq v(y) \iff v(Z)\leq v(Z\cup \{x\}) \leq v(Z\cup \{y\}).
\end{align*}

Responsive valuations have several equivalent definitions \citep{aziz2015fair}.
One of them uses the notion of \emph{domination}. 
Given a valuation $v$ on individual goods, 
a bundle $X$ is \emph{dominated by} a bundle $Y$, denoted $X\precsim_v Y$, if there is an injection $f: X\to Y$ such that $v(x) \leq v(f(x))$ for all $x\in X$.
The valuation $v$ is called \emph{responsive} 
if 
\begin{align*}
X \precsim_v Y \implies v(X)\leq v(Y).
\end{align*}
Intuitively, responsive valuations presume that agents rank individual goods, and that their ranking of bundles is consistent with the ranking of goods. 

If $v$ is additive then it is clearly responsive, but the opposite is not necessarily true. 
For example, suppose there are four goods with 
$v(w)<v(x)<v(y)<v(z)$.
Then the responsiveness condition does not imply anything about the relation between $\{z\}$ and $\{x,y\}$,
since none of them dominates the other. 
Similarly, responsiveness does not imply anything regarding
$\{w,z\}$ and $\{w,x,y\}$,
since none of them dominates the other. 
So it is possible that $v(\{z\})<v(\{x,y\})<v(\{w,x,y\})<v(\{w,z\})$. This is impossible with additive valuations.

The technique of picking-sequences, used for un-ordering an instance (see Section \ref{sub:ordering}), works for responsive preferences too. In fact, for each agent $i$, the bundle picked by $i$ during the picking-sequence dominates the bundle allocated to $i$ by the algorithm on the ordered instance (the picking-sequence implements the injection $f$).
Similarly, Lemma \ref{lem:bbsfair}, ensuring the existence of a BBFS-fair allocation, holds for responsive preferences too, since the proof only requires containment of bundles. 
Therefore, the class of responsive preferences initially seems like a good candidate for generalizing our results. 

Unfortunately, we show below that, with responsive valuations, no meaningful ordinal approximation is possible, even for two agents. 
This indicates that the ordinal maximin-share approxiation may be too strong for handling non-additive valuations.

\begin{proposition}
\label{prop:responsive}
For any integer $d\geq 1$, there is an instance with two agents with responsive valuations, in which no allocation guarantees both agents their $1$-out-of-$d$ maximin-share.
\end{proposition}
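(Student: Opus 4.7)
\medskip
\noindent\textbf{Proof proposal.}
The plan is to split on $d$ and build a separate instance in each case.

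For $d=1$ the result does not require non-additivity at all. Take two agents with additive valuations, say $v_i(g)=1$ for every $g\in M$ and both $i$, so that $\mathrm{MMS}^{1\text{-out-of-}1}_i = v_i(M) = m > 0$. Any allocation partitions $M$ into two disjoint bundles $(A_1,A_2)$, so at least one $A_i$ is a proper subset of $M$ and by positivity gets strictly less than $v_i(M)$. This handles $d=1$.

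For $d \geq 2$, additive instances cannot work: given additive $v_1,v_2$, cut-and-choose produces an allocation achieving $\mathrm{MMS}^{1\text{-out-of-}2}_i \geq \mathrm{MMS}^{1\text{-out-of-}d}_i$ for both agents. So I would exploit the specifically non-additive behaviour allowed by responsive valuations: a responsive $v$ can be strongly superadditive, with $v(A)+v(M\setminus A)\ll v(M)$ for every 2-partition, which is exactly what breaks cut-and-choose. The plan is to construct, for each $d\geq 2$, asymmetric responsive valuations $v_1,v_2$ on a suitable ground set $M$ of size $m(d)$ such that (a) $\mathrm{MMS}^{1\text{-out-of-}d}_i > 0$ for $i=1,2$; (b) every 2-partition $(A,M\setminus A)$ that is good for agent 1 (both sides at least $\mathrm{MMS}^{1\text{-out-of-}d}_1$) fails for agent 2 on both sides, and symmetrically; and then (c) checking both ways of handing a 2-partition to $(A_1,A_2)$ shows one agent always falls short of their MMS.

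The main obstacle is the responsiveness condition itself. Monotonicity with respect to the dominance partial order forces the acceptance family $G_i := \{S : v_i(S) \geq \mathrm{MMS}^{1\text{-out-of-}d}_i\}$ to be upward-closed under \emph{agent $i$'s own} dominance order: once a small bundle is acceptable, every bundle obtained from it by adding goods, or by replacing a good with one of higher individual value to agent $i$, is automatically acceptable too. This makes it impossible, for example, to insist that agent $i$ accepts a specific pair $\{g,h\}$ but not $\{g,h'\}$ when $v_i(h')\geq v_i(h)$, which is why my earlier naive attempts at ``agent 1 wants certain pairs, agent 2 wants complementary pairs'' fail. To beat this, the construction must choose the individual good values of agent 1 differently from those of agent 2, so that the two agents induce genuinely different dominance orders on the same ground set; the two upward-closed families $G_1$ and $G_2$ then live in these two different lattices and can be twisted into conflicting shapes on 2-partitions. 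The technical heart of the argument will be exhibiting one such pair of valuations for every $d\geq 2$ (for instance by taking $m$ goods arranged in blocks so that one agent ranks blockwise and the other ranks within-block), and verifying both the strong responsiveness condition $v_i(x)\leq v_i(y)\iff v_i(Z\cup\{x\})\leq v_i(Z\cup\{y\})$ for each $v_i$, and the simultaneous infeasibility condition over all 2-partitions.
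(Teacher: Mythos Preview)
Your proposal is a plan, not a proof: you never exhibit the valuations, and the ``technical heart'' you defer is the entire content of the proposition. More importantly, the plan rests on a false premise. You argue that with a common ranking of individual goods the two acceptance families $G_1,G_2$ live in the same dominance lattice and therefore cannot be twisted apart, so the agents must rank goods differently. The paper's construction refutes this: both agents share the ranking $v_i(g_1)>\cdots>v_i(g_m)$ with $m=2^{d^2}-1$, and the separation comes from \emph{majority} conditions. For $j\in[d^2]$ let $B_j$ be the family of bundles containing a strict majority of $\{g_1,\ldots,g_{2^j-1}\}$. If $X\in B_j$ and $Y\notin B_j$ then $Y$ cannot dominate $X$ (there is no injection from a majority into a minority), so a responsive valuation is free to set $v(X)>v(Y)$. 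Agent~1 is then given value $1$ on $\bigcup_{i}\bigcap_{j}B_{(i-1)d+j}$ and agent~2 on the ``transposed'' combination $\bigcup_{i}\bigcap_{j}B_{(j-1)d+i}$; explicit $d$-partitions witness $\MMSA{1}{d}=1$ for each, but if $A_1$ lies in agent~1's acceptable family then its complement is excluded from $B_{(i_1-1)d+j}$ for every $j$, which knocks it out of every intersection in agent~2's family. This is the idea you are missing: responsiveness leaves incomparable pairs unconstrained, and majority-vs-minority pairs are automatically incomparable.

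Two smaller issues. Your condition~(b) is not the right target: you need only that for every ordered allocation $(A_1,A_2)$, either $v_1(A_1)<\mathrm{MMS}_1$ or $v_2(A_2)<\mathrm{MMS}_2$; the ``both sides'' phrasing is neither necessary nor sufficient. And the superadditivity heuristic $v(A)+v(M\setminus A)\ll v(M)$ is a red herring here---the obstruction in the actual construction is purely combinatorial (which $B_j$'s a bundle and its complement can simultaneously belong to), not a magnitude gap.
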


\begin{proof}
We construct an instance with $m = 2^{d^2}-1$ goods. 
They are ranked the same for both agents: $v_i(g_1) > \ldots > v_i(g_m)$ for $i\in\{1,2\}$.

For each $j\in[d^2]$, we denote by $B_j$ the set of bundles that contain a majority of the goods in $\{g_1,\ldots,g_{(2^j)-1}\}$.
Note that $B_j\subseteq 2^M$.
For example:
\begin{itemize}
\item $B_1$ is the set of all bundles that contain $g_1$;
\item $B_2$ is the set of all bundles that contain at least two goods from $\{g_1,g_2,g_3\}$; 
\item $B_3$ is the set of all bundles that contain at least four goods from $\{g_1,\ldots, g_7\}$; 
\item 
$B_{d^2}$ is the set of all bundles that contain at least $2^{d^2-1}$ goods from $M$.
\end{itemize}
Let $X,Y$ be some bundles such that $X\in B_j$ and $Y\in 2^M\setminus B_j$ for some $j\in[d^2]$. The majority assumption implies that $X$ cannot be dominated by $Y$: there cannot be an injection from a majority to a minority. Therefore, a responsive valuation may
assign a larger value to $X$ than to $Y$.

Before proceeding with the proof, we exemplify it for the special case $d=2$.
We define the valuations of two agents as follows.
\begin{itemize}
\item For agent 1, we set $v_1(X) = 1$ for any  $X\in (B_1\cap B_2)\cup (B_3\cap B_4)$, and smaller values for other bundles.
Then $\MMSA{1}{d}_1 = 1$, since $M$ can be partitioned into 
$\{g_1,~ g_2,g_3\}\in B_1\cap B_2$ and $\{g_4,\ldots, g_7,~ g_8,\ldots, g_{15}\}\in B_3\cap B_4$. 
\item For agent 2, 
we set $v_2(X) = 1$ for any $X\in (B_1\cap B_3)\cup (B_2\cap B_4)$, and smaller values for other bundles.
Then $\MMSA{1}{d}_2 = 1$, since $M$ can be partitioned into 
$\{g_1; g_4,\ldots,g_7\}\in B_1\cap B_3$ and $\{g_2,g_3; g_8,\ldots, g_{15}\}\in B_2\cap B_4$.
\end{itemize}
Note that the agents' valuations are consistent with responsiveness. For example, consider a bundle 
$Y\not\in (B_1\cap B_2)\cup (B_3\cap B_4)$.
Then either $Y\not\in B_1\cup B_3$ or $Y\not\in B_1\cup B_4$ or $Y\not\in B_2\cup B_3$ or $Y\not\in B_2\cup B_4$.
In any case, $Y$ cannot dominate any bundle 
$X \in (B_1\cap B_2)\cup (B_3\cap B_4)$.
So assigning to $X$ a higher value than to $Y$ is consistent with responsive valuations.

Suppose now that an allocation $(A_1,A_2)$ gives agent 1 a value of at least $1$. This means that either $A_1\in B_1\cap B_2$ or $A_1\in B_3\cap B_4$. 
If $A_1\in B_1\cap B_2$, then $A_1$ contains $g_1$ and a majority of the goods from $\{g_1,g_2,g_3\}$.
This means that $A_2$ cannot contain $g_1$ and cannot contain a majority of $\{g_1,\ldots,g_3\}$. So $A_2\not\in B_1$ and $A_2 \not \in B_2$. This means that 
$A_2\not\in (B_1\cap B_3)\cup (B_2\cap B_4)$, so the value of agent 2 is less than $\MMSA{1}{d}_2$.
Similarly, if $A_1\in B_3\cap B_4$, then 
$A_2\not\in B_3$ and $A_2 \not \in B_4$, so again 
the value of agent 2 is less than $\MMSA{1}{d}_2$.
We conclude that no allocation gives both agents their 1-out-of-$d$ MMS.

We now generalize this construction to any integer $d$.
\begin{itemize}
\item Let $v_1(X) = 1$ for any bundle satisfying
\begin{align*}
X\in \bigcup_{i=1}^d \left( \bigcap_{j=1}^d B_{(i-1)d+j} \right)
\end{align*}
and smaller values for other bundles.
\item Let $v_2(X) = 1$ for any bundle satisfying
\begin{align*}
X\in \bigcup_{i=1}^d \left( \bigcap_{j=1}^d B_{(j-1)d+i} \right)
\end{align*}
and smaller values for other bundles.
\end{itemize}
Note that agent 1's valuations are consistent with responsiveness, since any bundle not in $\bigcup_{i=1}^d \left( \bigcap_{j=1}^d B_{(i-1)d+j} \right)$ cannot dominate a bundle from $\bigcup_{i=1}^d \left( \bigcap_{j=1}^d B_{(i-1)d+j} \right)$, and simlarly for agent 2.

To compute the agents' MMS, define $d^2$ bundles as follows. For each $j\in[d^2]$, let
\begin{align*}
G_j = \{g_{2^{(j-1)}},\ldots, g_{(2^j)-1}\}.
\end{align*}
For example, $G_1 = \{g_1\}, G_2=\{g_2,g_3\}, G_3 = \{g_4,\ldots,g_7\}$, and so on. 
Note that the $G_j$ are pairwise-disjoint, and for any $j\in[d^2]$, $G_j\in B_j$ since it contains the majority of goods in $\{g_1,\ldots, g_{(2^j)-1}\}$.
Then:
\begin{itemize}
\item $\MMSA{1}{d}_1 = 1$, by the partition 
with parts 
$P_i := \left( \bigcup_{j=1}^d G_{(i-1)d+j} \right)$ for $i\in[d]$.
\item $\MMSA{1}{d}_2 = 1$, by the partition 
with parts 
$Q_i := \left( \bigcup_{j=1}^d G_{(j-1)d+i} \right)$ for $i\in[d]$.
\end{itemize}

Suppose now that an allocation $(A_1,A_2)$ gives agent 1 a value of at least $1$. This means that
$A_1\in \left( \bigcap_{j=1}^d B_{(i_1-1)d+j} \right)$
for some $i_1\in[d]$.
Then, $A_2\not\in B_{(i_1-1)d+j}$ for any $j\in[d]$.
So 
$A_2\not\in \left( \bigcap_{j=1}^d B_{(j-1)d+i_2} \right)$ for any $i_2\in[d]$.
So the value of agent 2 is less than $1$.

We conclude that no allocation gives both agents their 1-out-of-$d$ MMS.
The proof holds for any positive integer $d$.
\end{proof}

\begin{remark}
\citet{babaioff2021competitive} prove that responsive preferences are a subset of submodular preferences. Every submodular preference relation can be represented by a submodular valuation function.
Therefore, the impossibility in Proposition \ref{prop:responsive} extends to submodular valuations too.
\end{remark}

\newpage

\vskip 0.2in
\bibliography{../ref}
\bibliographystyle{theapa}

\end{document}